\newcommand{\supp}{\mathit{support}}
\newcommand{\weight}{\mathit{weight}}
\newcommand{\fhw}{\mathit{fhw}}
\newcommand{\ghw}{\mathit{ghw}}
\newcommand{\np}{\textsf{NP}}
\newcommand{\OPT}{\mathit{OPT}}
\newcommand{\nop}[1]{}
\newcommand{\abs}[1]{\lvert #1 \rvert}
\newtheorem{theorem}{Theorem}
\newtheorem{proposition}[theorem]{Proposition}
\newtheorem{corollary}[theorem]{Corollary}
\newtheorem{lemma}[theorem]{Lemma}
\theoremstyle{definition}
\newdefinition{definition}[theorem]{Definition}
\newdefinition{remark}{Remark}
\newtheorem{example}{Example}
\newtheorem{question}{Open Question}
\newtheorem{conjecture}{Conjecture}
\title{Fractional Covers of Hypergraphs with Bounded Multi-Intersection\tnoteref{t1}}
\begin{document}

\begin{frontmatter}
  \tnotetext[t1]{An extended abstract of this article was presented at the 45th International Symposium on Mathematical Foundations of Computer Science (MFCS 2020)~\cite{DBLP:conf/mfcs/GottlobLPR20}.}
  
  \author[ox]{Georg Gottlob}
  \ead{georg.gottlob@cs.ox.ac.uk}

  \author[ox]{Matthias Lanzinger\corref{cor1}}
  \ead{matthias.lanzinger@cs.ox.ac.uk}

  \author[rp]{Reinhard Pichler}
  \ead{pichler@dbai.tuwien.ac.at}

  \author[ir]{Igor Razgon}
  \ead{igor@dcs.bbk.ac.uk}

  \cortext[cor1]{Corresponding author}
  \address[ox]{University of Oxford, Wolfson Building, Parks Road, Oxford OX1 3AQ, United Kingdom}
  \address[rp]{TU Wien, Favoritenstrasse 9, 1040 Wien, Austria}
  \address[ir]{Birkbeck University of London, Malet Street, London WC1E 7HX, United Kingdom}

  \begin{keyword}
    hypergraphs \sep fractional hypertree width \sep fractional graph theory \sep fractional edge cover \sep fractional hitting set
  \end{keyword}

  \begin{abstract}
Fractional (hyper-)graph theory is concerned with the specific problems that arise when fractional analogues of otherwise integer-valued (hyper-)graph invariants are considered. The focus of this paper is on fractional edge covers of hypergraphs. Our main technical result generalizes and unifies previous conditions under which the size of the support of fractional edge covers is bounded independently of the size of the hypergraph itself. We show how this combinatorial result can be used to extend previous tractability results for checking if the fractional hypertree width of a given hypergraph is $\leq k$ for some constant $k$. Moreover, we show a dual version of our main result for fractional hitting sets.
  \end{abstract}

\end{frontmatter}

\section{Introduction}
\label{sec:intro}

Fractional (hyper-)graph theory \cite{fractionalGraphTheory} has evolved into a mature discipline 
in graph theory  -- building upon early research efforts that date back to the 1970s
\cite{bergeFractionalGraphTheory}. The crucial observation \revision{motivating} this field is that many integer-valued 
(hyper-)graph invariants have a meaningful fractional analogue. 
Frequently,
the integer-valued invariants are defined in terms of an integer linear program (ILP) and the fractional analogue is 
obtained by the fractional relaxation. 
Examples of problems which have been studied in 
fractional (hyper-)graph theory 
comprise matching problems, coloring problems, covering problem and many more. 

Covering problems come in two principal flavors, namely \emph{edge covers} \revision{(also referred to as set covers)} and \emph{hitting sets} (also referred to as vertex covers). 
We shall concentrate on edge covers in the first place, and afterwards show how our results 
translate to hitting sets.
{\em Fractional edge covers\/} have attracted a lot of attention in recent times. 
On the one hand, this is due to a deep connection between information theory and database theory. 
Indeed, the famous ``AGM bound'' -- named after \revision{Atserias, Grohe, and Marx}~\cite{DBLP:journals/siamcomp/AtseriasGM13}~-- 
establishes a tight upper bound on the number of result tuples of relational joins in terms of fractional edge covers.
On the other hand, fractional hypertree width ($\fhw$) 
is to date the most general width-notion that allows one to define
tractable fragments of solving Constraint Satisfaction Problems (CSPs), 
answering Conjunctive Queries (CQs), 
and solving the Homomorphism Problem \cite{2014grohemarx}. 
The fractional hypertree width of a hypergraph
is defined in terms of the size of fractional edge covers of the bags in a tree decomposition.

Fractional (hyper-)graph invariants 
give rise to new challenges that do not  exist in the integral case. 
Intuitively, if a fractional (hyper-)graph invariant is obtained by the relaxation of a linear program (LP), one would expect things to become
easier, since we move from the intractable problem of ILPs to the tractable problem of LPs.
However, also the opposite may happen, namely that the fractional relaxation introduces complications not present
in the integral case.  To illustrate such an effect, we first recall some basic definitions.

\begin{definition}
\label{def:edgeCover}
A \emph{hypergraph} $H$ is a tuple $H = (V, E)$, consisting of a set of vertices 
$V$ and a set of hyperedges (or simply edges), which are non-empty subsets of $V$.
Let $\gamma$ be a function of the form $\gamma \colon E \rightarrow \revision{\mathbb{R}^+}$\revision{, i.e., mapping edges to the nonnegative reals,} 
Then the set of vertices \emph{covered} by $\gamma$ is defined as 
$B(\gamma) = \{ v\in V \mid \sum_{e\in E, v\in e} \gamma(e) \geq 1 \}$.
Intuitively, $\gamma$ assigns weights to the edges and a vertex $v$ is covered if 
the total weight of the edges containing $v$ is at least 1.

A  {\em  fractional edge cover\/} of $H$ is a function $\gamma$ with $V  \subseteq B(\gamma)$. 
An  {\em  integral edge cover\/} is obtained by restricting the function values of $\gamma$ to $\{0,1\}$.
The \emph{support} of $\gamma$ is defined as $\supp(\gamma) = \{e \in E \, \mid \, \gamma(e) \neq 0\}$. 
The \emph{weight} of $\gamma$ is defined as $\weight(\gamma) =  \sum_{e\in E} \gamma(e)$. 
The minimum weight of a fractional (resp.\ integral) edge cover of a hypergraph $H$ is referred to as the
fractional (resp.\ integral) \emph{edge cover number} of $H$.
\end{definition}
The following example adapted from \cite{DBLP:conf/pods/FischlGP18} illustrates which complications may arise if we 
move from the integral to the fractional case.

\begin{example}
\label{ex:LongEdge}
Consider the family $(H_n)_{n \geq 2}$ of hypergraphs with $H_n =(V_n,E_n)$ defined as 

\smallskip
$V_n = \{v_0, v_1, \dots, v_n\}$

$E_n= \{ e_0, e_1, \dots, e_n\}$ with 
$e_0 = \{v_1, \dots, v_n\}$ and $e_i = \{v_0,v_i\}$ for $i \in \{1, \dots, n\}$.

\smallskip

\noindent
The integral edge cover number of each $H_n$ is 2 and an optimal integral edge cover can be obtained, e.g., 
by setting $\gamma_n(e_0) = \gamma_n(e_1) = 1$ and $\gamma_n(e) = 0$ for all other edges. 
In contrast, the fractional edge cover number is $2 - \frac{1}{n}$ and the 
unique optimal fractional edge cover is $\gamma'_n$ with 
$\gamma'_n(e_0) = 1 - \frac{1}{n}$ 
and $\gamma'_n(e_i) = \frac{1}{n}$ for each $i \in \{1, \dots, n\}$.
For the support of these two covers, 
we have  $\abs{\supp(\gamma_n)}= 2$ and $\abs{\supp(\gamma'_n)}= n+1$. 
Hence,
the support of the optimal edge covers is bounded in the integral case but unbounded in the
fractional case.
\hfill
$\diamond$
\end{example}

As mentioned above, 
fractional hypertree width ($\fhw$) 
is to date the most general width-notion that allows one to define
tractable fragments of classical NP-complete problems, such as CSP solving and CQ answering. 
However, recognizing if a given hypergraph $H$ has $\fhw (H) \leq k$ for fixed $k \geq 2$ is 
itself an NP-complete problem \cite{DBLP:conf/pods/FischlGP18}\revision{, i.e., in the terminology of parameterised complexity, the problem is paraNP-hard}. 
It has recently been shown that the problem of checking $\fhw(H)\leq k$ becomes tractable
if we can efficiently enumerate the fractional edge covers of \revision{weight} $\leq k$ 
\cite{JACM2021}. This fact can be exploited to show that,
for classes of hypergraphs  
with bounded rank (i.e., max.\ size of edges), bounded degree
(i.e., max.\ number of edges containing a particular vertex), or bounded intersection
(i.e., max.\ number of vertices in the intersection of two edges), checking  
$\fhw(H)\leq k$ becomes tractable.
The size of the support has been recently 
\cite{JACM2021}
identified as a crucial parameter 
for the efficient enumeration of fractional edge covers of weight $\leq k$ for given $k \geq 1$.

The {\em overarching goal\/} of this work is to further extend and provide a uniform view 
of previously known 
structural properties of hypergraphs that guarantee a bound on the size of the support 
of fractional edge covers of a given weight. 
In particular, when looking at  Example~\ref{ex:LongEdge}, we want to avoid the
situation that the support of fractional edge covers with constantly bounded weight increases with the size of the hypergraph. 
Our {\em main combinatorial result\/} (Theorem~\ref{maintheor}) will be that the size of the support of a fractional edge cover 
does not depend on the number of vertices or edges of a hypergraph but
instead only on the weight of the cover as well as the structure of its edge intersections.

Formally, the structure of the edge intersections is captured by the so-called 
{\em Bounded Multi-Intersection Property (BMIP)\/} \cite{DBLP:conf/pods/FischlGP18}: 
a class ${\cal C}$ of hypergraphs has this property, if in every hypergraph $H \in {\cal C}$, 
the intersection of $c$ edges of $H$ has at most $d$ elements, 
for constants $c \geq 2$ and $d \geq 0$.
The BMIP thus generalizes all of the above mentioned hypergraph properties that ensure 
bounded support of fractional edge covers of given weight and, hence, also 
guarantee tractability of checking $\fhw(H)\leq k$, namely  
bounded rank, bounded degree, and bounded intersection. 
Moreover, when considering the incidence graph $G$ of $H$, 
the BMIP corresponds to $G$ not containing large complete bipartite graphs.
A notable result in the area of parameterized complexity \cite{DBLP:journals/talg/PhilipRS12}
is the polynomial kernelizability of the Dominating Set Problem for graphs without $K_{c,d}$, i.e., without the complete bipartite graph on $c$ and $d$ edges.
A minor tweaking of the results yields a polynomial kernelization for the Set Cover Problem
if the corresponding incidence graph does not contain $K_{c,d}$.
Our result thus reveals an interesting connection: 
it shows that a condition that enables efficient solving of the Set Cover problem
also enables efficient checking of bounded fractional hypertree width. 

\nop{
The Set Cover problem can be considered a special case of the 
Hypertree Width problem. In fact, the reduction to Set Cover was used
to demonstrate the W-hardness of computing Hypertree Width.
}

In summary, the main results of this paper are as follows: 
\begin{itemize}%
\item First, we show that the size of the support of a fractional edge cover 
only depends on the weight of the cover and of the structure of its edge intersections (Theorem~\ref{maintheor}). 
More specifically, if 
the intersection of $c$ edges of a hypergraph $H$ has at most $d$ elements, 
and $H$ has a fractional edge cover of weight $\leq k$, 
then $H$ also has a fractional edge cover of weight $\leq k$ with a support whose size 
only depends on $c,d$, and $k$.
\item As an important consequence of this result, we 
show that the problem of checking if a given hypergraph $H$ has $\fhw(H) \leq k$
is tractable for hypergraph classes satisfying the BMIP (Theorem~\ref{mainres}).
In particular, BMIP generalizes all previously known hypergraph classes with tractable 
$\fhw$-checking,  namely bounded rank, bounded degree, and bounded intersection.

\item We transfer our results on fractional edge covers to fractional hitting sets, 
where we again vastly generalize previously known hypergraph classes (such as 
hypergraphs of bounded 
rank \cite{furedi1988}) that guarantee
a bound on the size of the support of fractional~hitting~sets (Theorem~\ref{thm:vcsupp}). 
\end{itemize}

\noindent
The paper is organized as follows: after recalling some basic notions and results in Section~\ref{sec:prelim}, 
we will present our main technical result on fractional edge covers in Section~\ref{sec:tech}. 
The detailed proof of a crucial lemma is separated in Section~\ref{sec:boundext}.
In Section~\ref{sec:apps}, we apply our result on the bounded support of fractional edge covers to 
fractional hypertree width and extend our main combinatorial result to fractional hitting sets. Finally, in Section~\ref{sec:conclusion}, we 
summarize our results and give an extensive overview of interesting open questions in this area of research.

\section{Preliminaries}
\label{sec:prelim}

\paragraph{ Some general notation}
It is convenient to use the following short-hand notation for various kinds of sets:
we write $[n]$ for the set $\{1,\dots, n\}$ of natural numbers. 
Let $S$ be a set of sets. Then we write $\bigcap S$  and $\bigcup S$ for the intersection
and union, respectively, of the sets in $S$, i.e., 
 $\bigcap S = \{x \, \mid \, x \in s$ for all $s \in S\}$ and
 $\bigcup S = \{x \, \mid \, x \in s$ for some $s \in S\}$.

\paragraph{Hypergraphs}
We recall some basic notions on hypergraphs. 
We have already introduced 
in Section~\ref{sec:intro}
hypergraphs as pairs $(V,E)$ consisting of a set $V$ of vertices and 
a set $E$ of edges. W.l.o.g., we 
assume throughout this paper that a hypergraph \revision{contains no isolated vertices (i.e., vertices
that do not occur in any edge), no pair of vertices that are incident to the exact same set of edges, and no empty edges. 
Such hypergraphs are typically referred to as \emph{reduced} hypergraphs}.
Given a hypergraph $H = (V,E)$, the \emph{dual hypergraph}
$H^d  = (W,F)$ 
is defined as $W = E$ and $F = \{ \{e \in E \mid v \in e\} \mid v \in V\}$. 

The {\em incidence graph\/} of a hypergraph $H = (V,E)$ is a bipartite graph $(W,F)$ with 
$W = V \cup E$, such that, for every  $v\in V$ and $e \in E$, there is an edge \revision{$\{v,e\}$} in $F$ iff 
$v \in e$. Note that a hypergraph $H$ and its dual hypergraph $H^d$ have
the same incidence graph. 

In this work, we are particularly interested in the structure of the edge intersections of a hypergraph. 
To this end, recall the notion of $(c,d)$-hypergraphs for integers 
$c \geq 2$ and $d \geq 0$
from \cite{JACM2021}: 
$H = (V,E)$ is a $(c,d)$-hypergraph if the intersection of any $c$ distinct edges in $E$ 
has at most $d$ elements, i.e., for every subset $E' \subseteq E$ with
$|E'| = c$, we have $|\bigcap E'| \leq d$.  
A class ${\cal C}$ of hypergraphs is said to satisfy the {\em bounded multi-intersection property (BMIP)}
\cite{DBLP:conf/pods/FischlGP18}, if there exist
$c \geq 2$ and $d \geq 0$, such that every $H$ in ${\cal C}$ is a $(c,d)$-hypergraph. As a special
case studied in \cite{fischl2019hyperbench,DBLP:conf/pods/FischlGP18}, a 
class ${\cal C}$ of hypergraphs is said to satisfy the {\em bounded intersection property (BIP)},
if there exists $d \geq 0$, such that every $H$ in ${\cal C}$ is a $(2,d)$-hypergraph. Hypergraphs with degree bounded by some constant 
$c \geq 1$ are $(c+1,0)$-hypergraphs. Moreover, bounded rank is clearly a special case of bounded intersection, that is, if the size of each
hyperedge is bounded a constant $d$, also the intersection of any 
two hyperedges is of course bounded by $d$.

We now recall tree decompositions, which form the basis of various notions 
of width.
A tuple $(T, (B_u)_{u \in T})$ is a \emph{tree  decomposition (TD)\/} 
of a hypergraph $H =(V,E)$, if $T$ is a tree, 
every $B_u$ is a subset of $V$ and the following two conditions are satisfied:
\begin{enumerate}[label=(\arabic*)]
\item 
 For every edge $e \in E$, there is a node $u$ in $T$,  such that  $e \subseteq B_u$, and
\item for every vertex $v \in V$,  $\{u \in T \mid v \in B_u\}$ is connected in $T$.
\end{enumerate}
Note that, by slight abuse of notation, we write $u \in T$ to express that $u$ is a node in $T$.

For a function $f\colon 2^{V} \to \mathbb{R}^+$, the
\emph{$f$-width} of a TD $(T, (B_u)_{u \in T})$ 
is defined as 
$\sup\{f(B_u) \mid u \in T\}$ and the $f$-width of a hypergraph is the
minimal $f$-width over all its TDs.

An edge weight function is a function $\gamma: E \to \mathbb{R}^+$.  
We call $\gamma$ a \emph{fractional edge cover} of a set $X \subseteq V$
by edges in $E$, 
if for every $v \in X$, we have $\sum_{ \{e \, \mid \, v \in e\}} \gamma(e) \geq 1$.
The weight of a fractional edge cover is defined as 
$\weight(\gamma) = \sum_{e \in E} \gamma(e)$.
For $X \subseteq V$, we write 
$\rho_H^*(X)$ to denote the minimal weight over all fractional edge covers of $X$. 
\revision{With respect to some edge weight function $\gamma$, we will say that the \emph{weight of a vertex} is the sum of all the weights on edges that contain $v$.}
The {\em fractional hypertree width (fhw)\/} of a hypergraph $H$, denoted $\fhw(H)$,
is then defined as the $f$-width for $f = \rho_H^*$. Likewise, the $\fhw$ of a TD of $H$ 
is its $\rho_H^*$-width.

The following technical lemma for weight-functions in $(c,d)$-hypergraphs will be important.

\newcommand{\lemmaSmallEdges}{
There is a function $f(c,d,k)$ with the following property:
let $H$ be a $(c,d)$-hypergraph and let $\gamma$
be an edge  weight function with $\weight(\gamma) \leq k$. 
Moreover, let $0< \epsilon \leq 1$ be dependant on $c, d, k$ 
and assume that, for each $e \in E$, $\gamma(e) \leq \frac{\epsilon}{2c}$. 
Let $B^{\epsilon}(\gamma)$  be the set of all vertices of weight
at least $\epsilon$. Then $|B^{\epsilon}(\gamma)| \leq f(c,d,k)$ holds. 
}

\begin{lemma} \label{smalledges}
\lemmaSmallEdges
\end{lemma}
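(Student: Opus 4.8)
The plan is to bound $|B^{\epsilon}(\gamma)|$ by exhibiting, inside $B^\epsilon(\gamma)$, a ``rich'' structure that the $(c,d)$-property forbids once the set is too large. Since every edge $e$ has weight $\gamma(e) \le \frac{\epsilon}{2c}$, a vertex of weight $\ge \epsilon$ must be covered by at least $2c$ edges; more importantly, no small collection of edges can account for much of its weight. First I would restrict attention to the subhypergraph consisting only of the edges in $\supp(\gamma)$ together with the vertices in $B^\epsilon(\gamma)$; this does not change any vertex weights, and by $\weight(\gamma)\le k$ and $\gamma(e)\le \frac{\epsilon}{2c}$ we already get $|\supp(\gamma)| \ge \frac{2c}{\epsilon}$ only as a lower bound, which is not directly helpful, so the real work is on the vertex side.

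The key idea I would pursue is a sunflower/pigeonhole argument phrased through the dual. Suppose for contradiction that $|B^\epsilon(\gamma)|$ is very large (larger than $f(c,d,k)$ for a suitable $f$). Consider the sets $E_v = \{e \in \supp(\gamma) \mid v \in e\}$ for $v \in B^\epsilon(\gamma)$; each satisfies $\gamma(E_v)\ge \epsilon$. I would greedily extract a large subfamily of vertices whose associated edge-weight-mass is ``spread out'': formally, build a maximal set $W \subseteq B^\epsilon(\gamma)$ and a function assigning to each $v\in W$ a set $S_v\subseteq E_v$ with $\gamma(S_v)$ close to $\epsilon$, chosen so that the $S_v$ form a $\Delta$-system-like configuration. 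The point of the weight cap $\gamma(e)\le\frac{\epsilon}{2c}$: no single edge can be ``responsible'' for more than a $\frac{1}{2c}$ fraction of a vertex's required mass, so if a vertex $v$ shares all but a bounded part of its covering mass with $c-1$ other chosen vertices, then those $c$ vertices are simultaneously covered by a common set of edges carrying weight more than, say, $\frac{\epsilon}{2}$ — in particular by at least $\frac{\epsilon}{2}\cdot\frac{2c}{\epsilon} = c$ common edges, contradicting nothing yet, but iterating this with enough vertices forces $c$ edges whose common intersection contains more than $d$ vertices. That last step is where the $(c,d)$-hypothesis is used: $c$ edges intersect in at most $d$ vertices.

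Concretely, I would aim for the following chain. By total weight $\le k$, the multiset $\{(v,e) : v\in B^\epsilon(\gamma),\ e\ni v,\ e\in\supp(\gamma)\}$ has total $\gamma$-weight $\ge \epsilon\,|B^\epsilon(\gamma)|$, but it is also $\le \sum_{e}\gamma(e)\cdot|e\cap B^\epsilon(\gamma)|$. If some edge met $B^\epsilon(\gamma)$ in a huge set we would already want a bound; the obstruction is edges that are large on $B^\epsilon(\gamma)$, so a second reduction is to handle such edges separately (there are at most $\frac{2ck}{\epsilon}$ edges of positive weight contributing, but an edge can be arbitrarily large). To control large edges I would use the dual $(c,d)$-structure or simply a direct counting: pick vertices one at a time, each time discarding all edges through the chosen vertex; each vertex kills edges of total weight $\ge\epsilon$, so after $\frac{k}{\epsilon}$ steps all of $\supp(\gamma)$ is used up, meaning every remaining vertex of $B^\epsilon(\gamma)$ shares substantial weight — at least $\epsilon - (\text{weight killed so far below its threshold})$ — hmm, this needs care. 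The clean way is a Ramsey/sunflower bound: among $N$ vertices, if $N$ exceeds the appropriate function of $c,d$ and of $\lceil k/\epsilon\rceil$-many ``mass buckets'', the sunflower lemma yields $c$ vertices and $>d$ common edges, contradiction. I would therefore set $f(c,d,k)$ to be whatever bound comes out of applying the sunflower lemma with core size limited by the $(c,d)$-condition and petal/mass parameters governed by $k$ and $c$ (note $\epsilon$ drops out because only the ratio $k/\epsilon$ versus $1/\epsilon$-scaled weights matters).

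The main obstacle, as I see it, is dealing with edges that intersect $B^\epsilon(\gamma)$ in a large vertex set: the $(c,d)$-property restricts intersections of $c$ \emph{edges}, not the size of a single edge, so a naive sunflower argument on the vertex side must be supplemented by an argument (likely a dual application of the $(c,d)$-condition, or an iterated ``peel off one edge'' argument as in the standard bounded-intersection proofs) showing that one cannot have too many vertices each needing a large covering-mass unless some $c$-wise edge intersection blows up. Getting the quantitative bookkeeping right — how the $\frac{\epsilon}{2c}$ cap converts ``most of the mass is shared'' into ``at least $c$ shared edges'' and then the sunflower core into $>d$ shared vertices — is the crux; everything else is routine counting.
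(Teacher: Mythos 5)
Your proposal correctly identifies the tension the lemma exploits (edges are individually light so each vertex in $B^\epsilon(\gamma)$ must collect mass from many edges, while the total mass is bounded by $k$ and the $(c,d)$-property restricts how much edges can overlap), but it does not reach a proof, and the sketched mechanism has two concrete problems.

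First, the sunflower route is not actually set up to run. To apply the sunflower lemma to the family $\{E_v : v\in B^\epsilon(\gamma)\}$ you would need the sets $E_v$ (or suitable subsets $E'_v$ with $\gamma(E'_v)\ge \epsilon$) to have size bounded by a function of $c,d,k$. The hypothesis $\gamma(e)\le \frac{\epsilon}{2c}$ gives a \emph{lower} bound $|E'_v|\ge 2c$, not an upper bound: edges can have arbitrarily small positive weight, so a vertex can need arbitrarily many edges to accumulate mass $\epsilon$. You flag this as ``the main obstacle'' yourself, but the bucketing idea you gesture at is not filled in, and it is not clear it can be, because there is no bound on $|\supp(\gamma)|$ either. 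Second, the contradiction you aim for is stated with $c$ and $d$ transposed: you write ``the sunflower lemma yields $c$ vertices and $>d$ common edges, contradiction,'' but $c$ vertices sharing many edges does not violate the $(c,d)$-property. What violates it is $>d$ vertices all lying in each of $c$ common edges, i.e.\ a sunflower of $d+1$ of the sets $E_v$ with a core of size $\ge c$. This is fixable, but it reinforces that the quantitative argument was never actually carried out; you acknowledge that ``getting the quantitative bookkeeping right \ldots is the crux; everything else is routine counting,'' which concedes the gap.

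The paper's proof avoids the bounded-set-size issue entirely. It double-counts weight in a bipartite graph between $B^\epsilon(\gamma)$ and the set $T$ of $c$-tuples of distinct edges, assigning to the incidence $(v,(e_1,\ldots,e_c))$ the weight $\prod_{j=1}^c\gamma(e_j)$. On the $T$ side, the $(c,d)$-property bounds degrees by $d$, and $\sum_T\prod_j\gamma(e_j)\le k^c$, so total weight is $\le d\,k^c$. On the vertex side, the elementary inequality (Claim~A / Lemma~7.2 of the cited work) shows each $v\in B^\epsilon(\gamma)$ is incident to weight at least $(\epsilon/2)^c$, using precisely the cap $\gamma(e)\le\frac{\epsilon}{2c}$ and the lower bound $\sum_{e\ni v}\gamma(e)\ge\epsilon$. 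Dividing gives $|B^\epsilon(\gamma)|\le d(2k/\epsilon)^c$. This is a fundamentally different (and much more direct) argument than a sunflower/Ramsey extraction: it never isolates a rigid combinatorial configuration, it just compares two weighted counts. If you want to salvage your approach, the missing idea is exactly this shift from set-cardinality combinatorics to weighted counting over $c$-tuples of edges, which sidesteps the unboundedness of $\supp(\gamma)$ and $|E_v|$.
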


The intuition of this lemma is as follows: suppose that we put rather little weight on each edge 
(namely $\leq \frac{\epsilon}{2c}$). Then, for a vertex $v$ to be in $B^{\epsilon}(\gamma)$
(that is, to receive total weight $\geq \epsilon$ from all the edges containing $v$), 
$v$ must be in the intersection of quite a big number of edges (namely $\geq 2c$ such edges). 
However,  in a $(c,d)$-hypergraph, 
the intersection of $c$ or more edges contains at most $d$ vertices.
So, intuitively, the ``contribution'' of each edge to covering a particular vertex is limited. 
Hence, a weight function $\gamma$ with $\weight(\gamma) \leq k$ can 
put the desired weight $\geq \epsilon$ only to a limited number of vertices, where this limit
depends on $c,d$, and $k$. 

\begin{proof}[Proof of Lemma \ref{smalledges}]
The proof is based on the following claim 
(which is Lemma~7.2 in~\cite{JACM2021}).

\medskip

\noindent
{\sf Claim A.}
Fix an integer $c\geq 1$.
  \revision{Let $X= (x_1, \dots, x_n )$ be a sequence} of positive numbers $\leq \delta$
  and fix $w$ such that $\sum_{j=1}^{n}x_j \geq w \geq \delta c$.
  Then we have $\sum x_{i_1}\cdot x_{i_2} \cdot \,\cdots\, \cdot x_{i_c} \geq (w-\delta c)^c$,
  where the sum is over all $c$-tuples $(i_1, \dots, i_c)$ of \emph{distinct} integers
  from $[n]$.

\medskip

We proceed with a counting argument.
Imagine a bipartite graph $G = (B^{\epsilon}(\gamma), T, E(G))$ where $T$ is the set of all $c$-tuples 
of distinct edges from $H$.
In $G$, there is an edge from $v \in B^{\epsilon}(\gamma)$ to $(e_1, \dots, e_c) \in T$ iff $v$ is in $e_1 \cap \cdots \cap e_c$.
Furthermore, we assign weight $\prod_{j=1}^c \gamma(e_j)$ to every edge in $E(G)$ 
incident to a tuple $(e_1, \dots, e_c) \in T$.
To avoid confusion, in this proof, we write $E(G)$ and $E(H)$ to refer to the set of edges in the graph $G$ 
and in the hypergraph $H$, respectively.

We now count the total weight in $G$ from both sides. First observe that on the $T$ side, we have degree at most $d$ 
because $H$ is a $(c,d)$-hypergraph. 
Therefore, the total weight in $G$ is at most $d \cdot \sum_{(e_1, \dots, e_c) \in T} \prod_{j=1}^c \gamma(e_j)$.
Observe that $\sum_{(e_1, \dots, e_c) \in T} \prod_{j=1}^c \gamma(e_j) \leq \left(\sum_{e_1 \in E(H)} \gamma(e_1)\right) \cdot \cdots \cdot \left(\sum_{e_c \in E(H)} \gamma(e_c)\right)$ as, 
by distributivity, all the terms of the sum on the left-hand side
are also present on the right-hand side of the inequality. Furthermore, we have $\sum_{e \in E(H)} \gamma(e) \leq k$ and thus, by putting it all together, we see that the total weight in $G$ is at most $k^cd$.

From the $B^{\epsilon}(\gamma)$ side, consider an arbitrary vertex $v \in B^{\epsilon}(\gamma)$ and let $e_1, \dots, e_n$ be the
edges in $E(H)$ containing $v$ \revision{with nonzero weight}.  We have $\sum_{j=1}^n\gamma(e_j) \geq \epsilon$ and 
$\gamma(e_j) \leq \frac{\epsilon}{2c}$ for each $j \in [n]$. 
We can apply the above claim for $X = \{\gamma(e_1), \dots, \gamma(e_n)\}$, 
$\delta = \frac{\epsilon}{2c}$, and $w = \epsilon$ to get the inequality
$\sum \gamma(e_{j_1}) \cdot \, \cdots \, \cdot \gamma(e_{j_c})  
\geq (\epsilon -  \frac{\epsilon}{2c} \cdot c)^c = (\frac{\epsilon}{2})^c$, where
the sum ranges over all $c$-tuples 
$(e_{j_1}, \dots, e_{j_c})$
of distinct edges in $E(H)$ containing $v$. 

We conclude that $v$ (now considered as a vertex in $G$) is incident to edges whose total weight is 
$\geq (\frac{\epsilon}{2})^c$ in $E(G)$.
Since we have seen above that the total weight of all edges in $E(G)$  is $\leq k^cd$,
there can be no more than $d(\frac{2k}{\epsilon})^c$ vertices in $B^{\epsilon}(\gamma)$.
\end{proof}

\paragraph{Linear Programs}
We assume some familiarity with Linear Programs (LPs). 
Formally, we are dealing here with minimization problems
of the form $\min \mathbf{c}^T \mathbf{x}$ subject to $\mathbf{A} \mathbf{x} \geq \mathbf{b}$ and $\mathbf{x} \geq \mathbf{0}$, where
$\mathbf{x}$ is a vector of $n$ variables, $\mathbf{c}$ is a vector of $n$ constants, $\mathbf{A}$ is a an $m \times n$ 
matrix, $\mathbf{b}$ is a vector of $m$ constants, and $\mathbf{0}$ stands for the $n$-dimensional zero-vector.
More specifically, for a hypergraph $H = (V,E)$ and vertices $Y \subseteq V$, 
the fractional edge cover number $\rho_H^*(Y)$ of $Y$ is obtained as the optimal value 
of the following LP: let $E = \{e_1, \dots, e_n\}$ \revision{(note that in a slight departure from typical naming, $n$ here is the number of edges in the hypergraph)} and $Y = \{y_1, \dots, y_m\}$, then
$c$ is the $n$-dimensional vector $(1, \dots, 1)$, 
$b$ is the $m$-dimensional vector $(1, \dots, 1)$, and 
$\mathbf{A} \in \{0,1\}^{[m]\times [n]}$, such that $A_{ij} = 1$ if $y_i \in e_j$ and  $A_{ij} = 0$  otherwise.
In the sequel, we will refer to such LPs with 
$\mathbf{c} \in \{1\}^n$, $\mathbf{b}\in \{1\}^m$ and 
$\mathbf{A} \in \{0,1\}^{[m]\times [n]}$
as \emph{unary} linear programs.

For given number $n$ of edges, there are at most $2^n$ possible 
different inequalities of the form $\mathbf{A_i} \mathbf{x} \geq 1$. We thus get the following property of 
unary LPs, which intuitively states that if the optimum is bigger than 
some threshold $k$, then it exceeds $k$ by some distance.

\begin{lemma} \label{dnk}
For all positive integers $n$ and $k$,
there is an integer $D(n,k)$ such that 
for any unary LP $Z$ of at most $n$
variables if $\OPT(Z)>k$ then $\OPT(Z)-k>\frac{1}{D(n,k)}$, 
where $\OPT(Z)$ denotes the minimum of the LP.
\end{lemma}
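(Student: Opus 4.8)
The plan is to exploit the fact that the optimum of a unary (hence $0/1$-coefficient) LP is a rational number whose denominator is bounded by a quantity depending only on $n$; once the optimum exceeds the integer $k$, it must therefore exceed it by at least the reciprocal of that bound. First I would dispose of the degenerate case: if $Z$ is infeasible then $\OPT(Z)=+\infty$ and the inequality holds trivially, so assume $Z$ is feasible, say with feasible polyhedron $P=\{x\in\mathbb{R}^{n}_{\geq 0}\mid Ax\geq b\}$. Since $P$ is nonempty, contains no line (being contained in the nonnegative orthant), and the objective $c^{T}x=\sum_{j}x_{j}$ is bounded below by $0$ on $P$, standard polyhedral theory guarantees that $\OPT(Z)$ is attained at a vertex $x^{*}$ of $P$.

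Next I would examine such a vertex. The vertex $x^{*}$ is the unique solution of a linear system $Mx=\beta$ formed by $n$ linearly independent constraints of the LP that are tight at $x^{*}$, each being either one of the rows $A_{i}x\geq 1$ or one of the nonnegativity constraints $x_{j}\geq 0$. In both cases the corresponding row of $M$ lies in $\{0,1\}^{n}$ and the corresponding entry of $\beta$ is $0$ or $1$, so $M\in\{0,1\}^{n\times n}$ is nonsingular and $\beta\in\{0,1\}^{n}$. By Cramer's rule, $x^{*}_{j}=\det(M^{(j)})/\det(M)$ where $M^{(j)}$ is obtained from $M$ by substituting $\beta$ for the $j$-th column; note $M^{(j)}\in\{0,1\}^{n\times n}$ as well. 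All these determinants are integers, and by Hadamard's inequality (or, more crudely, the Leibniz expansion) their absolute values are at most an explicit integer $\Delta(n)$, e.g.\ $\Delta(n)=\lfloor n^{n/2}\rfloor$. Hence $\OPT(Z)=\sum_{j=1}^{n}x^{*}_{j}=p/q$ for some integer $p\geq 0$ and some integer $q=\abs{\det(M)}$ with $1\leq q\leq\Delta(n)$.

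Finally I would close with an integrality step: if $p/q=\OPT(Z)>k$, then $p>kq$ with both sides integers, so $p\geq kq+1$ and thus $\OPT(Z)-k\geq 1/q\geq 1/\Delta(n)$. Setting $D(n,k):=\Delta(n)+1$ then gives $\OPT(Z)-k>1/D(n,k)$; note that the bound actually depends only on $n$.

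The steps above are almost entirely routine; the only points needing a little care are the justification that the optimum is attained (feasibility plus pointedness of $P$, both immediate here) and fixing an explicit determinant bound $\Delta(n)$. I do not expect a genuine obstacle, since the essential content is simply that a unary LP can only yield ``coarsely rational'' optima, leaving a uniform gap above any integer it strictly exceeds.
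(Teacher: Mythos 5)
Your proof is correct, and it takes a genuinely different route from the paper's. The paper argues non-constructively: a unary LP on at most $n$ variables has constraints drawn from at most $2^n$ possible subsets of variables, so there are at most $2^{2^n}$ distinct such LPs; among their finitely many optima that exceed $k$, one takes the minimum $\mathrm{Opt}(n,k)$ and sets $D(n,k) = 2/(\mathrm{Opt}(n,k)-k)$. This is shorter but gives no handle on the magnitude of $D(n,k)$. Your argument instead goes through polyhedral structure: the optimum is attained at a vertex, the vertex solves a nonsingular $\{0,1\}$-system, Cramer's rule plus Hadamard's bound force the optimum to be a rational $p/q$ with $1 \le q \le \lfloor n^{n/2}\rfloor$, and integrality of $p$ and $kq$ yields the gap $\ge 1/q$. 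The payoff is an explicit, $k$-independent bound $D(n,k) = \lfloor n^{n/2}\rfloor + 1$, which is strictly more informative than the paper's existential statement; the cost is a slightly longer argument relying on the (standard) facts that a pointed, feasible LP with bounded objective attains its optimum at a vertex and that vertices are basic solutions. Both proofs are valid; yours would in fact be the one to use if one wanted to trace concrete bounds through the paper's later constructions.
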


\section{Bounding the Support of Fractional Edge Covers}
\label{sec:tech}
\subsection{The main combinatorial result}
\label{sec:MainCombinatorialResult}
In this section we establish our main combinatorial result,
Theorem~\ref{maintheor}. Every set of vertices in a $(c,d)$-hypergraph
can be covered in a way such that the size of the 
support depends only on $c$,
$d$, and the set's fractional edge cover number. Recall that we denote by $B(\gamma)$ the set of all vertices $v$ such that $\gamma(v) \geq 1$ where $\gamma$ is a weight function on edges and the weight of a vertex is the sum of all incident edge weights. \revision{For sets $S$ of hyperedges, it will also be convenient to write $\gamma(S)$ for $\sum_{e\in S} \gamma(e)$.}

\begin{theorem} \label{maintheor}
There is a function $h(c,d,k)$ such that the following is true.
Let $H=(V,E)$ be a $(c,d)$-hypergraph and let $\gamma$ be an assignment of weights
to $E$. \revision{Let $k\in \mathbb{Q}^+$ such} that $\weight(\gamma) \leq k$.
Then there exists an assignment $\nu$ of weights to $E$
such that $weight(\nu) \leq k$, $B(\gamma) \subseteq B(\nu)$
and $|\supp(\nu)| \leq h(c,d,k)$.
\end{theorem}

The first step of our reasoning is to consider the situation where $|B(\gamma)|$
is bounded.  In this case it is easy to transform $\gamma$ into the desired
$\nu$. Partition all the hyperedges of $H$ into equivalence classes corresponding
to non-empty subsets of $B(\gamma)$ such that two edges $e_1$ and  $e_2$ are equivalent if and only if $e_1 \cap B(\gamma) = e_2 \cap B(\gamma)$.
Then let $s_X$ be the total weight (under $\gamma$)
of all the edges from the equivalence class where $e \cap B(\gamma) = X$. Identify one representative of each (non-empty) equivalence
class and let $e_X$ be the representative of the equivalence class corresponding to $X$.
Then define $\nu$ as follows. For each $X$ corresponding to a non-empty equivalence class, set
$\nu(e_X)=s_X$. For each edge $e$ whose weight has not been assigned in this way,
set $\nu(e)=0$. It is clear that  $B(\gamma) \subseteq B(\nu)$ and that the support of $\nu$
is at most $2^{|B(\gamma)|}$, which is bounded by assumption.

Of course, in general we cannot assume that $|B(\gamma)|$ is bounded.
Therefore, as the next step of our reasoning, we consider a more general situation
where we have a bounded set ${\bf S}=\{S_1, \dots, S_r\}$ 
where each $S_i$ is a set of at most $c$ hyperedges such that the following conditions hold
regarding ${\bf S}$:\revision{
\begin{enumerate}[label=(\roman*)]
    \item for each $1 \leq i \leq r$, $\gamma(S_i) \geq 1$, and
    \item the set $U=B(\gamma) \setminus \bigcup_{i \in [r]} \bigcap S_i$ is of bounded
size. 
\end{enumerate}
}
Then the assignment $\nu$ as in Theorem \ref{maintheor} can be defined as follows.
For each $e \in \bigcup {\bf S}$, set $\nu(e)=\gamma(e)$. Next, we observe that  for the subhypergraph
 $H'=H - \bigcup {\bf S}$, $|B_{H'}(\gamma)|$ is bounded, where the subscript $H'$ means that we consider $B$ for hypergraph $H'$. Therefore, 
we define $\nu$ on the remaining edges as in the paragraph above.
It is not hard to see that the support of the resulting $\nu$ is of size at most $c \cdot r+2^{|U|}$. 
We are going to show that such a family of sets of edges  can always be found 
for $(c,d)$-hypergraphs (after a possible modification of $\gamma$).

\begin{definition} [{\bf Well-formed pair}]
  Let $H=(V,E)$ be a hypergraph and let $\gamma$ be an edge weight function. We say
 $({\bf S},U)$ is a  \emph{well-formed pair} (with regard to $\gamma$)
if it satisfies the following conditions:
\begin{enumerate}
\item $U \subseteq B(\gamma)$
\item ${\bf S}=\{S_1, \dots, S_r\}$ where each
$S_i$ is a set of at most $c$ hyperedges of $H$.
\item $B(\gamma) \setminus U \subseteq \bigcup_{i \in [r]} \bigcap S_i$. 
\end{enumerate}
We denote $\sum_{i \in [r]} |S_i|+2^{|U|}$ by $n({\bf S},U)$ 
and refer to it as the \emph{size} of $({\bf S},U)$.
\end{definition}

\begin{definition}
A well-formed pair $({\bf S},U)$ is \emph{perfect} if there is an assignment
$\nu : E \to [0,1]$ with $\weight(\nu) \leq k$ and $|\supp(\nu)| \leq n({\bf S },U)$
such that $\bigcup_{i \in [r]} \bigcap { S_i}  \cup U \subseteq B(\nu)$.
  
\end{definition}

Our aim now is to prove the existence of a perfect pair
$({\bf S},U)$ of size bounded by a function depending on $c$, $d$, and $k$.
Clearly,
this will imply Theorem \ref{maintheor}.

In particular, we will define the \emph{initial pair} which is a well-formed pair
but not necessarily perfect. Then we will define two transformations from
one well-formed pair into another and prove existence of a function
$\transf$ so that if $({\bf S_1},U_1)$ is transformed into $({\bf S_2},U_2)$,
then $n({\bf S_2},U_2) \leq \transf(n({\bf S_1},U_1))$.
We will then prove that if we form a sequence of well-formed pairs
starting from the initial pair and obtain every next element by a transformation
of the last one then, after a bounded number of steps we obtain a perfect
well-formed pair. We start by defining the initial pair.

\begin{definition} \label{initpair}
The \emph{initial pair} is $({\bf S_0},U_0)$
where ${\bf S_0}=\{\{e\} \mid \gamma(e) \geq 1/(2c)\}$
and $U_0=B(\gamma) \setminus \bigcup_{\{e\} \in {\bf S_0}} e$. 
\end{definition}

\begin{lemma} \label{initbound} 
There is a function $\init$ such that
$n({\bf S_0},U_0) \leq \init(c,d,k)$
\end{lemma}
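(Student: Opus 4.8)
The plan is to bound the two summands of $n({\bf S_0},U_0) = \sum_{\{e\}\in {\bf S_0}} |\{e\}| + 2^{|U_0|} = |{\bf S_0}| + 2^{|U_0|}$ separately, since each reduces to an already-available estimate. For the first summand, observe that every edge $e$ with $\{e\}\in {\bf S_0}$ contributes $\gamma(e)\geq 1/(2c)$ to the total weight, and $\weight(\gamma)\leq k$; hence $|{\bf S_0}|\leq 2ck$. This is the easy half and needs no appeal to the $(c,d)$-property.

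The second summand is where the structure of $(c,d)$-hypergraphs enters. Here I would apply Lemma~\ref{smalledges} to a suitably chosen restriction of $\gamma$. Concretely, let $H'$ be the subhypergraph obtained by deleting all the ``heavy'' edges, i.e.\ those $e$ with $\gamma(e)\geq 1/(2c)$, and let $\gamma'$ be $\gamma$ restricted to the remaining edges. Then by construction every edge of $H'$ satisfies $\gamma'(e) < 1/(2c)$, so with $\epsilon = 1$ the hypothesis $\gamma'(e)\leq \epsilon/(2c)$ of Lemma~\ref{smalledges} holds; also $\weight(\gamma')\leq \weight(\gamma)\leq k$ and $H'$ is still a $(c,d)$-hypergraph (deleting edges preserves the multi-intersection bound). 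The lemma then yields $|B^{1}(\gamma')|\leq f(c,d,k)$, where $B^{1}(\gamma')$ is the set of vertices whose weight under $\gamma'$ is at least $1$. The point is that $U_0 = B(\gamma)\setminus \bigcup_{\{e\}\in{\bf S_0}} e$ consists exactly of those covered vertices that lie in no heavy edge; for such a vertex $v$, all of its covering weight comes from non-heavy edges, so $\sum_{e\ni v}\gamma'(e) = \sum_{e\ni v}\gamma(e)\geq 1$, i.e.\ $v\in B^{1}(\gamma')$. Hence $|U_0|\leq |B^{1}(\gamma')|\leq f(c,d,k)$, and therefore $2^{|U_0|}\leq 2^{f(c,d,k)}$.

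Combining the two bounds gives $n({\bf S_0},U_0)\leq 2ck + 2^{f(c,d,k)}$, and we may take $\init(c,d,k) := 2ck + 2^{f(c,d,k)}$, which proves the lemma.

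The only genuine subtlety — and the one step I would be careful to state cleanly — is the reduction to Lemma~\ref{smalledges}: namely checking that dropping the heavy edges does not reduce the coverage weight at the vertices in $U_0$ (it does not, precisely because those vertices are disjoint from every heavy edge by the definition of $U_0$), and that the $(c,d)$-property and the weight bound are inherited by the subhypergraph. Everything else is an elementary counting argument, so no real obstacle is expected beyond getting these inheritance facts and the identification $U_0\subseteq B^{1}(\gamma')$ stated precisely.
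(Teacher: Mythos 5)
Your proof is correct and follows the same route as the paper's: bound $|{\bf S_0}|\leq 2ck$ directly from $\weight(\gamma)\leq k$, and bound $|U_0|\leq f(c,d,k)$ via Lemma~\ref{smalledges} with $\epsilon=1$. The paper's one-line proof elides the step you spell out (restricting $\gamma$ to the non-heavy edges so the hypothesis $\gamma(e)\leq \epsilon/(2c)$ of Lemma~\ref{smalledges} is met, and noting that $U_0$ avoids all heavy edges so its coverage is unchanged by the restriction), but that is exactly the intended reading, so the two arguments coincide.
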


\begin{proof}
\revision{We can bound $|U_0|$  by applying Lemma~\ref{smalledges} to all edges with weight less than $1/(2c)$, i.e., to the subhypergraph without the edges in ${\bf S_0}$. In particular,} $|U_0| \leq f(c,d,k)$ where $f$ is as
in Lemma \ref{smalledges} (for $\epsilon=1$)
and $|\bigcup {\bf S_0}| \leq 2ck$ 
by construction.
\end{proof}

We now introduce our two kinds of transformations, \emph{folding} and \emph{extension}. 
A folding removes a set $S^*$ of $c$ edges from 
${\bf S}$ and adds to $U$ the vertices in the intersection
of the edges of $S^*$. In the resulting well-ordered pair
$({\bf S'},U')$, ${\bf S'}$ has one less element than ${\bf S}$
and $U'$, compared to $U$, has a bounded size increase of at most $d$
vertices. Thus the action of folding gets the resulting well-formed
pair closer to one with empty first component, which is a perfect pair
according to the discussion in the beginning of this section. 

\begin{definition} \label{folding}
Let $({\bf S},U)$ be a well-formed pair
such that ${\bf S}$ contains elements of size $c$.
Let $S^* \in {\bf S}$ such that $|S^*|=c$.
Let ${\bf S'}={\bf S} \setminus \{S^*\}$ and
$U'=U \cup (\bigcap S^* \cap B(\gamma))$.
We call $({\bf S'},U')$ a \emph{folding} of $({\bf S},U)$. 
\end{definition}

The folding, however, is possible only if ${\bf S}$ has an element of
size $c$. Otherwise, we need a more complicated transformation called
an \emph{extension}.  The extension takes an element $S \in {\bf S}$
of size $\revision{c'}<c$ and expands it by replacing $S$ with several subsets of
$E(H)$ each containing all the edges of $S$ plus one extra edge.  This
replacement may miss some of the elements $v$ of
$B(\gamma) \cap \bigcap S$ simply because $v$ is not contained in any
of these extra edges.  This excess of missed elements is added to
$U$ and thus all the conditions of a well-formed pair are satisfied.

\begin{definition} \label{defext}
  Let $({\bf S},U)$ be a well-formed pair
with ${\bf S}  \neq \emptyset$ such that every element of ${\bf S}$ is of size at most $c-1$.
For the extension, let $S \in {\bf S}$ be an element called the \emph{extended element}
and let a set $S'$ of hyperedges be called the \emph{extending set}.
We refer to $L=(\bigcap S \cap B(\gamma) ) \setminus \bigcup S'$ as the set of \emph{light vertices}.
An \emph{extension}  of $({\bf S},U)$ is $({\bf S'},U')$  where
${\bf S'}=({\bf S} \setminus \{ S\})  \cup \{ S \cup \{e\}| e \in S'\}$
and $U'=U \cup L$.   

\end{definition}

\begin{proposition} \label{validext}
With data as in Definition \ref{defext},
$({\bf S'},U')$ is a well-formed pair.
\end{proposition}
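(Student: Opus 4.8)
The plan is to verify directly that $({\bf S'},U')$ satisfies the three defining conditions of a well-formed pair, drawing on the corresponding properties of $({\bf S},U)$ that are available by hypothesis. Conditions~(1) and~(2) are essentially bookkeeping; all the real content sits in the covering condition~(3).

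First I would dispatch condition~(1): since $U' = U \cup L$ with $U \subseteq B(\gamma)$ by assumption and $L = (\bigcap S \cap B(\gamma)) \setminus \bigcup S' \subseteq B(\gamma)$ directly from its definition, we get $U' \subseteq B(\gamma)$. Next, for condition~(2) I would observe that every member of ${\bf S'}$ is either a member of ${\bf S} \setminus \{S\}$, hence of size at most $c-1$, or of the form $S \cup \{e\}$ with $e \in S'$; since the hypothesis of Definition~\ref{defext} guarantees $|S| \leq c-1$, such a set has size at most $c$. This is exactly where the ``$\leq c-1$'' assumption earns its keep: it leaves room for the single extra edge. (One also uses here that $S'$ is a set of hyperedges of $H$, so that each $S \cup \{e\}$ really is a set of hyperedges of $H$.)

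The main step is condition~(3), i.e.\ $B(\gamma) \setminus U' \subseteq \bigcup_{S_i' \in {\bf S'}} \bigcap S_i'$. I would take an arbitrary $v \in B(\gamma) \setminus U'$; since $U \subseteq U'$, condition~(3) for $({\bf S},U)$ yields some element of ${\bf S}$ whose intersection contains $v$, and then split into two cases. If this element is different from the extended element $S$, it persists unchanged in ${\bf S'}$ and $v$ is covered. If it is $S$ itself, then $v \in \bigcap S \cap B(\gamma)$; because $v \notin U'$ and $L \subseteq U'$, we have $v \notin L$, so the definition of $L$ \emph{forces} $v \in \bigcup S'$, say $v \in e$ with $e \in S'$. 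Then $v \in (\bigcap S) \cap e = \bigcap(S \cup \{e\})$ and $S \cup \{e\} \in {\bf S'}$, so $v$ is again covered. Since these cases are exhaustive, (3) follows.

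The hard part --- though it is only mildly hard --- is the second case above: one must notice that the light vertices $L$, i.e.\ exactly those vertices of $\bigcap S \cap B(\gamma)$ that the extending set $S'$ fails to touch, have been deliberately absorbed into $U'$, so they cannot reappear as uncovered vertices. Once this is spotted, the whole argument reduces to a short chain of set inclusions.
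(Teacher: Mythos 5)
Your proof is correct, and it is the natural direct verification one would expect here; the paper in fact omits the proof of this proposition, so there is no competing argument to compare against. You check the three conditions of a well-formed pair, and the only nontrivial point — that a vertex of $B(\gamma)\setminus U'$ previously covered only via $\bigcap S$ must lie in some $e\in S'$ because it is not a light vertex — is exactly the observation that makes the definition of $L$ work.
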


At the first glance the transformation performed by the extension is radically
opposite to the one done by the folding: the first component grows rather than
shrinks. Note, however, that the new sets replacing the removed one 
contain a larger number of edges and thus they are closer to being of size $c$
at which stage the folding can be applied to them. The intuition is that after a sufficiently
large number of foldings and extensions, a well-formed pair with empty first component
is eventually obtained.

For our overall goal, we then need to show that the size of the resulting
perfect pair is indeed bounded by a function of $c$, $d$, and $k$.
To that end, the following lemma first establishes that a single step in this process
increases the size of the well-formed pair in a controlled manner.
To streamline our path to the main result, the proof of the 
lemma is deferred to Section~\ref{sec:boundext}.

\begin{restatable}{lemma}{lemboundext}\label{lemboundext}
There is a function $ext$ such that the following holds.
Let $({\bf S},U)$ be a well-formed pair
with ${\bf S}  \neq \emptyset$ such that every of element of ${\bf S}$ is of size at most $c-1$.
Then one of the following two statements is true.
\begin{enumerate}
\item $({\bf S},U)$ is a perfect pair.
\item There is an extension $({\bf S'},U')$ of $({\bf S},U)$ such that
$n({\bf S'},U') \leq ext(n({\bf S},U))$.
We refer to $({\bf S'},U')$ as a \emph{bounded extension} of $({\bf S},U)$
\end{enumerate}
\end{restatable}

For the sake of syntactical convenience, we unify the notions of folding and
bounded extension into a single notion of transformation and prove the related
statement following from Lemma \ref{lemboundext} and the definition of folding.

\begin{definition}
Let $({\bf S},U)$ and $({\bf S'},U')$ be well-formed pairs.
We say that $({\bf S'},U')$ is a \emph{transformation} of $({\bf S},U)$
if it is either a folding or a bounded extension of $({\bf S},U)$.
\end{definition}

\begin{lemma} \label{boundtransf}
There is a monotone function $\transf$ with $\transf(x) \geq x$ for any 
natural number $x$ such that the following holds. 
If $({\bf S},U)$ is a well-formed pair, then one of the following
two statements is true.
\begin{enumerate}
\item $({\bf S},U)$ is a perfect pair.
\item There exists a transformation $({\bf S'},U')$ of $({\bf S},U)$
such that \\$n({\bf S'},U') \leq \transf(n({\bf S},U))$.
\end{enumerate}
\end{lemma}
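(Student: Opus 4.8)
The plan is to prove Lemma~\ref{boundtransf} by a straightforward case distinction on whether the first component $\mathbf{S}$ of the well-formed pair contains an element of size exactly $c$, using the two transformation notions already developed. First I would dispose of the trivial degenerate case: if $\mathbf{S} = \emptyset$, then by condition~3 of a well-formed pair we have $B(\gamma) \setminus U \subseteq \bigcup_{i\in[r]}\bigcap S_i = \emptyset$, so $B(\gamma) \subseteq U$; taking $\nu$ to be the assignment produced by the argument in the paragraph following Theorem~\ref{maintheor} (partitioning edges by their trace on $U$), we get $\weight(\nu)\le k$, $U \subseteq B(\nu)$, and $|\supp(\nu)| \le 2^{|U|} \le n(\emptyset,U)$, so $(\mathbf{S},U)$ is perfect and statement~1 holds.

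Next I would handle the case where $\mathbf{S} \neq \emptyset$ and some $S^* \in \mathbf{S}$ has $|S^*| = c$. Here I apply a folding (Definition~\ref{folding}) to obtain $(\mathbf{S}',U')$ with $\mathbf{S}' = \mathbf{S}\setminus\{S^*\}$ and $U' = U \cup (\bigcap S^* \cap B(\gamma))$. Since $|\bigcap S^*| \le d$ by the $(c,d)$-hypergraph property (as $|S^*| = c$), we have $|U'| \le |U| + d$, and the sum $\sum_i |S_i|$ drops by $c$, so $n(\mathbf{S}',U') = \bigl(\sum_i|S_i| - c\bigr) + 2^{|U|+d} \le \sum_i|S_i| + 2^{d}\cdot 2^{|U|} \le 2^d \cdot n(\mathbf{S},U)$. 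This is a bounded transformation, so statement~2 holds with this folding. The remaining case is $\mathbf{S}\neq\emptyset$ and every element of $\mathbf{S}$ has size at most $c-1$; this is exactly the hypothesis of Lemma~\ref{lemboundext}, which gives either that $(\mathbf{S},U)$ is perfect (statement~1) or that there is a bounded extension $(\mathbf{S}',U')$ with $n(\mathbf{S}',U') \le \ext(n(\mathbf{S},U))$ (statement~2).

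Finally I would assemble the function $\transf$. Combining the three cases, it suffices to take $\transf(x) = \max\{x,\ 2^d\cdot x,\ \ext(x)\}$, and then replace this by a monotone majorant that also satisfies $\transf(x)\ge x$ — e.g.\ $\transf(x) = \max_{y\le x}\max\{2^d y, \ext(y), y\}$, which is monotone by construction and dominates $x$ since the $y = x$ term with the trivial bound $y$ is included. In every non-perfect case, one of the transformations constructed above witnesses $n(\mathbf{S}',U') \le \transf(n(\mathbf{S},U))$, so statement~2 holds. The main obstacle is purely bookkeeping: one must be careful that the folding bound is stated in terms of the \emph{size} $n(\mathbf{S},U) = \sum_i|S_i| + 2^{|U|}$ rather than $|U|$ alone, so that the additive increase of $d$ in $|U|$ translates into a multiplicative factor $2^d$ on the exponential term, and then ensure the single function $\transf$ simultaneously dominates this folding factor, the extension function $\ext$ from Lemma~\ref{lemboundext}, and the identity — all while remaining monotone, which is needed later when iterating transformations.
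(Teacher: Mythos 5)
Your proof is correct and takes essentially the same route as the paper: split on whether $\mathbf{S}$ contains an element of size $c$ (folding) or not (bounded extension via Lemma~\ref{lemboundext}), dispose of the $\mathbf{S}=\emptyset$ case as perfect, and then monotonize a pointwise bound $\transf'$ into the required $\transf$. The only cosmetic difference is that you handle the empty-$\mathbf{S}$ case directly rather than by contraposition (the paper writes ``not perfect $\Rightarrow$ $\mathbf{S}\neq\emptyset$''), and you supply the explicit folding bound $2^d\cdot n(\mathbf{S},U)$ where the paper merely asserts ``clearly bounded.''
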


\begin{proof}
Assume that $({\bf S},U)$ is not a perfect pair.
Then $|{\bf S}|$ is not empty (see the discussion at the beginning of this section).
Suppose that an element of ${\bf S}$ is of size $c$.
Then we set $({\bf S'},U')$ to be a folding of $({\bf S},U)$.
By definition of the folding and of $(c,d)$-hypergraphs,
$({\bf S'},U')$ is obtained from $({\bf S},U)$ by removal of an element
from ${\bf S}$ and adding at most $d$ vertices to $U$.
Hence the size of $({\bf S'},U')$ is clearly bounded in 
the size of $({\bf S},U)$.
If all elements of ${\bf S}$ are of size at most $c-1$
then by Lemma \ref{lemboundext}, 
there is a bounded extension $({\bf S'},U')$ of $({\bf S},U)$.

Clearly, we can specify a function $\transf'$ so that in both
cases $n({\bf S},U) \leq \transf'(n({\bf S},U))$.
In particular, to satisfy the requirement for $\transf$, it suffices to
set $\transf(x)=\max(x, \max_{i \in [x]} \transf'(\revision{i}))$
for each natural number $x$.
\end{proof}

Now that we know that each individual step on our path to a perfect pair increases the size only in
a bounded fashion, we need to establish that the number of steps is also bounded by a function of $c$, $d$, and $k$.
The following auxiliary theorem states that such a bound exists. 
\begin{definition}
A sequence of $({\bf S_1},U_1), \dots, ({\bf S_q}, U_q)$ 
is a \emph{sequence of transformations} if 
for each $i \in [q-1]$ the following two statements hold
\begin{enumerate}
\item $({\bf S_i},U_i)$ is not a perfect pair.
\item $({\bf S_{i+1}},U_{i+1})$  is a transformation of $({\bf S_i},U_i)$ \revision{as in Lemma~\ref{boundtransf}}.
\end{enumerate}
\end{definition}

\begin{restatable}{theorem}{seqsizethm} \label{seqsize}
There is a monotone function $sl$ such that the following is true.
Let $({\bf S_1},U_1), \dots, ({\bf S_q}, U_q)$  be a sequence 
of transformations.
Then $$q \leq sl(n({\bf S_1},U_1)).$$
\end{restatable}

The proof of Theorem \ref{seqsize} is provided in Section \ref{sec:length}. 

In summary, we have shown that we can reach a perfect pair in a bounded number of transformations. Moreover, each transformation
increases the size of a pair in a controlled manner. We are now ready to prove our main result.

\begin{proof}[Proof of Theorem~\ref{maintheor}]
Consider the following algorithm.
\begin{enumerate}
\item Let $({\bf S_0},U_0)$ be the initial pair (see Definition \ref{initpair}).
\item $q \leftarrow 0$
\item While $({\bf S_q},U_q)$ is not a perfect pair
       \begin{enumerate}
       \item $q \leftarrow q+1$
       \item Let $({\bf S_q},U_q)$ be a transformation of $({\bf S_{q-1}},U_{q-1})$, which exists by Lemma~\ref{boundtransf} 
       \end{enumerate}
\end{enumerate}
By Theorem \ref{seqsize}, the above algorithm stops with the final
$q$ being no higher than $sl(n({\bf S_1},U_1))$.
It follows from the description of the algorithm that $({\bf S_q},U_q)$ is a 
perfect pair. It remains to show that its size is bounded by a function
of $c,d,k$.
\begin{equation} \label{eqq1}
    q \leq  sl(n({\bf S_0},U_0)) \leq sl(\init(c,d,k))
\end{equation}
the second inequality follows from 
Lemma \ref{initbound} and the monotonicty of 
 $sl$.
Next, by the properties  of $\transf$, an inductive application
of Lemma~\ref{boundtransf} and Lemma~\ref{initbound} yields
\begin{equation}  \label{eqq2}
n({\bf S_q},U_q) \leq \transf^{q}(\init(c,d,k)) 
\end{equation}
where superscript $q$ means that function $\transf$ is composed with itself $q$-times, that is $\transf(\transf(\transf(...)))$.

Let $h(c,d,k)=\transf^{sl(\init(c,d,k))}(\init(c,d,k))$.
It follows from combination of \eqref{eqq1} and \eqref{eqq2}
that $n({\bf S_q},U_q) \leq h(c,d,k)$.
\end{proof}

 \subsection{Proof of Lemma \ref{lemboundext}} \label{sec:boundext}

The first step of the proof is to define a unary linear program of bounded
size associated with $({\bf S},U)$.
Then we will demonstrate that if the optimal value of this linear program
is at most $k$, then $({\bf S},U)$ is perfect. Otherwise,
we show that a bounded extension can be constructed. 
 
In order to define the linear program, we first formally define 
equivalence classes of edges covering $U$ (see the informal discussion
in Section~\ref{sec:MainCombinatorialResult}).  

\begin{definition}[\bf Working subset]
A set of vertices $U' \subseteq U$ is called \ \emph{working subset} 
(for $({\bf S},U)$) if there is $e \in E(H) \setminus \bigcup {\bf S}$
such that $e \cap U=U'$. This $e$ is called a \emph{witnessing edge}
of $U'$ and the set of all witnessing edges of $U'$ is denoted by $W_{U'}$.
\end{definition}

Continuing on the previous definition,
it is not hard to see that the sets $W_{U'}$ partition the set of 
edges of $E(H) \setminus \bigcup {\bf S}$ having a non-empty 
intersection with $U$. Choose an arbitrary but fixed representative
of each $W_{U'}$ and let $A_U$ be the set of these representatives
which we also refer to as the set of \emph{witnessing representatives}.
Now, we are ready to define the linear program.

\begin{definition}[$LP({\bf S},U)$]
\label{def:lpSU}
The linear program $LP({\bf S},U)$ of $({\bf S},U)$ has 
the set of variables $X=\{x_e \mid e \in \bigcup {\bf S} \cup A_U\}$. 
The objective function is the minimization
of $\sum_{x_e \in X} x_e$. The constraints are of the following
three kinds.
\begin{enumerate}
\item $\{ 0 \leq x_e \leq 1 \mid x_e \in X \}$.
\item $\{One_S \mid S \in {\bf S}\}$ where $One_S$ is $\sum_{e \in S} x_e \geq 1$.
\item $\{One_u \mid u \in U\}$ where $One_u$ is $\sum_{e \in E_u} x_e \geq 1$
where $E_u$ in turn is the subset of $\bigcup {\bf S} \cup A_U$ consisting of all the edges
containing $u$. 
\end{enumerate}
\end{definition}

\begin{lemma} \label{properf}
Assume that the optimal solution of $LP(S,U)$ is at most $k$.
Then $({\bf S},U)$ is a perfect pair.
\end{lemma}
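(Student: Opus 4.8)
The plan is to take an optimal solution $x^*$ of $LP({\bf S},U)$ with $\sum_{e} x^*_e \leq k$ and turn it into an edge weight function $\nu$ on $H$ witnessing that $({\bf S},U)$ is perfect, i.e.\ with $\weight(\nu)\leq k$, $|\supp(\nu)|\leq n({\bf S},U)$, and $\bigcup_{i\in[r]}\bigcap S_i \cup U \subseteq B(\nu)$. First I would set $\nu(e) = x^*_e$ for every $e \in \bigcup{\bf S} \cup A_U$, and $\nu(e)=0$ for all other edges. Since $\bigcup{\bf S}\cup A_U \subseteq E$, this is a well-defined edge weight function, and $\weight(\nu) = \sum_{e\in\bigcup{\bf S}\cup A_U} x^*_e \leq k$ is immediate from the objective value bound. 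For the support: $|\supp(\nu)| \leq |\bigcup{\bf S}| + |A_U|$. Here $|\bigcup{\bf S}| \leq \sum_{i\in[r]}|S_i|$, and $|A_U| \leq 2^{|U|}$ because $A_U$ has exactly one representative per working set $U'$, each $U'$ is a subset of $U$, and distinct working sets give disjoint $W_{U'}$'s, so there are at most $2^{|U|}$ of them. Hence $|\supp(\nu)| \leq \sum_{i\in[r]}|S_i| + 2^{|U|} = n({\bf S},U)$, as required.

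The remaining obligation is the covering condition $\bigcup_{i\in[r]}\bigcap S_i \cup U \subseteq B(\nu)$, and this is where the two families of LP constraints come in. For a vertex $u \in U$: the constraint $One_u$ says $\sum_{e\in E_u} x^*_e \geq 1$, where $E_u$ is the set of edges of $\bigcup{\bf S}\cup A_U$ containing $u$; since $\nu$ agrees with $x^*$ on exactly these edges and the total $\nu$-weight of edges containing $u$ is at least $\sum_{e\in E_u}\nu(e) \geq 1$, we get $u \in B(\nu)$. For a vertex $v \in \bigcup_{i\in[r]}\bigcap S_i$: pick $i$ with $v \in \bigcap S_i$, so $v$ lies in every edge of $S_i$. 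The constraint $One_{S_i}$ gives $\sum_{e\in S_i} x^*_e \geq 1$, and since every such $e$ contains $v$ and $\nu(e)=x^*_e$ for $e \in S_i \subseteq \bigcup{\bf S}$, the total $\nu$-weight of edges containing $v$ is at least $\sum_{e\in S_i}\nu(e)\geq 1$, so $v\in B(\nu)$.

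The only subtle point — and the main thing to get right rather than a genuine obstacle — is to make sure the representatives in $A_U$ correctly account for $U$-coverage: a priori, $E_u$ as defined uses the chosen witnessing representatives, not all of $E\setminus\bigcup{\bf S}$, so one must check that $One_u$ is actually satisfiable and meaningful. This is fine because each edge of $E\setminus\bigcup{\bf S}$ meeting $U$ lies in exactly one class $W_{U'}$, its representative in $A_U$ contains exactly the same subset $U'$ of $U$, and in particular contains $u$ iff $u\in U'$; so the representatives "see" every $u\in U$ that is covered by any edge outside $\bigcup{\bf S}$. Combined with the edges of $\bigcup{\bf S}$ (which are all variables of the LP), the LP therefore has a feasible point whenever $H$ restricted to the relevant edges covers $U\cup\bigcup_i\bigcap S_i$ — but we do not even need feasibility as a hypothesis here, since it is given that the optimum is at most $k$, hence finite, hence attained by some $x^*\geq 0$. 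Putting the three parts together yields that $\nu$ is the required assignment, so $({\bf S},U)$ is perfect.
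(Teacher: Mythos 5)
Your argument is exactly the paper's proof of Lemma~\ref{properf}, just with the ``direct inspection'' that the paper leaves implicit worked out explicitly: pad the optimal LP solution with zeros, bound the support by $|\bigcup{\bf S}| + |A_U| \le \sum_{i\in[r]}|S_i| + 2^{|U|}$, and read off coverage from the constraints $One_{S_i}$ and $One_u$. Correct, and the same approach.
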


\begin{proof}
Each variable $x_e$ of $LP({\bf S},U)$  corresponds to an edge $e$
and this correspondence is injective.
For each $x_e$, let $\nu(e)$ be the value of $x_e$ in the optimal solution.
For each edge $e$ not having a corresponding variable, set $\nu(e)=0$. Note that $\nu$ is an edge weight function with total weight at most $k$.
It follows from a direct inspection that
$U \cup \bigcup_{i \in [r]}  \bigcap {\bf S_i}  \subseteq B(\gamma)$
and the size of the support of $\nu$ is at most $n({\bf S},U)$.
\end{proof}

\revision{The proof of Lemma~\ref{properf} establishes a correspondence between solutions of $LP(S,U)$ and edge weight functions. We will implicitly extend notions for edge weight functions (like their weight) to solutions of $LP(S,U)$ via this correspondence for the rest of this section.}

As stated above, in case the optimal value of $LP({\bf S},U)$  is greater than
$k$, we are going to demonstrate the existence of a bounded extension of $({\bf S},U)$.
The first step towards identifying such an extension is to identify the 
extended element of ${\bf S}$. Combining Lemma~\ref{dnk} from Section~\ref{sec:prelim} 
with Lemma~\ref{upper} below, we observe that ${\bf S}$ has an element 
$S^*$ such that $\gamma(S^*)$ is \revision{bounded away from} 
$1$.
This $S^*$ will be the extended element. 

\begin{lemma} \label{upper}
 Let $({\bf S},U)$ be a well-formed pair.
Let ${\bf S^*}$ be the subset of ${\bf S}$ consisting 
of all $S$ such that $\gamma(S)<1$.
Let $\alpha$ be an optimal solution for $LP({\bf S},U)$.
Then $\weight(\alpha) \leq \weight(\gamma)+\sum_{S \in {\bf S^*}} (1-\gamma(S))$. 
\end{lemma}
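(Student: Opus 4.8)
The plan is to exhibit an explicit feasible solution of $LP(\mathbf{S},U)$ whose weight is at most the claimed bound; since $\alpha$ is optimal, its weight can only be smaller. The natural candidate is to start from $\gamma$ itself, restricted to the variables of the LP, and then to ``top up'' each constraint $One_S$ that $\gamma$ fails to satisfy. Concretely, for each edge $e$ appearing as a variable of $LP(\mathbf{S},U)$ — that is, $e \in \bigcup \mathbf{S} \cup A_U$ — I would set an initial value $\beta(e) = \gamma(e)$. This already satisfies every constraint $One_u$ for $u \in U$: indeed, since $\gamma$ is an edge weight function with $U \subseteq B(\gamma)$, the total $\gamma$-weight of edges containing $u$ is at least $1$, and I need to argue that all such edges are in fact among the LP variables. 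This is where the definition of $A_U$ and the working-set partition does the work: every edge of $\bigcup\mathbf{S}$ is a variable, and every edge of $E \setminus \bigcup\mathbf{S}$ meeting $U$ has its weight ``transferred'' onto the representative in $A_U$ — so I would actually set $\beta(e_{U'}) = \sum_{e \in W_{U'}} \gamma(e)$ for each witnessing representative $e_{U'} \in A_U$, which preserves the coverage of every $u \in U$ while keeping the total weight bounded by $\weight(\gamma)$.

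The remaining task is to fix the constraints of the first kind. For $S \in \mathbf{S}$ with $\gamma(S) \geq 1$ the constraint $One_S$ is already satisfied by $\beta$ (all edges of $S$ are variables, since $S \subseteq \bigcup\mathbf{S}$). For $S \in \mathbf{S^*}$, i.e.\ $\gamma(S) < 1$, I would pick one edge $e_S \in S$ and add $1 - \gamma(S)$ to its value. After these additions we obtain a function $\beta'$ that is feasible for $LP(\mathbf{S},U)$: the $One_u$ constraints are only helped by increasing values, and each $One_S$ now sums to at least $1$. The total weight is $\weight(\beta') \leq \weight(\gamma) + \sum_{S \in \mathbf{S^*}} (1 - \gamma(S))$, because the transfer step does not increase the total and each top-up adds exactly $1-\gamma(S)$. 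Since $LP(\mathbf{S},U)$ is a minimization problem and $\alpha$ is an optimal solution, $\weight(\alpha) \leq \weight(\beta')$, which is the claimed inequality.

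One subtlety I would be careful about: I must check that the values $\beta'(e)$ stay within $[0,1]$, since the LP variables are implicitly nonnegative but the feasibility argument only needs $x \geq 0$, not $x \leq 1$ — the unary LP formulation as described has constraints $Ax \geq b$, $x \geq 0$, with no upper bound, so nonnegativity is all that is required and there is no obstruction here. The only genuinely load-bearing point is the claim that the edges contributing to each constraint are exactly captured by the LP variables after the transfer, which follows directly from the fact (noted right after the definition of working sets) that the sets $W_{U'}$ partition the edges of $E \setminus \bigcup\mathbf{S}$ that meet $U$, together with the observation that edges of $E \setminus \bigcup\mathbf{S}$ disjoint from $U$ contribute to no constraint of $LP(\mathbf{S},U)$ and can be safely dropped. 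I do not expect any real obstacle; the lemma is essentially a bookkeeping argument converting $\gamma$ into an LP-feasible point.
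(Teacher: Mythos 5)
Your proof is correct and follows essentially the same strategy as the paper's: construct a feasible point of $LP(\mathbf{S},U)$ by starting from $\gamma$, consolidating the weight of each class $W_{U'}$ onto its witnessing representative in $A_U$ (so coverage of every $u\in U$ is preserved among LP variables), and then topping up each deficient constraint $One_S$ by adding $1-\gamma(S)$ to one edge of $S$. The paper performs the top-up first (using a $\max$ rather than a sum when the same edge is chosen for several $S$, a minor tightening) and the transfer second, but this ordering is immaterial; both arguments bound the optimum by exhibiting the same feasible solution.
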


\begin{proof}
Let $\beta$ be an arbitrary assignment of weights to the hyperedges
of $H$. We say that $\beta$ satisfies a constraint $One_S$ 
for $S \in {\bf S}$ if $\beta(S) \geq 1$
and that $\beta$ satisfies the constraint $One_u$ for
$u \in U$ if $\beta(E_u) \geq 1$.

We are going to demonstrate an assignment of weights whose
total weight exceeds that of $\gamma$ by at most  
$\sum_{S \in {\bf S^*}} (1-\gamma(S))$ and that satisfies all the
constraints $One_S$ and $One_v$. Clearly, this will imply
correctness of this theorem.

For each $S \in {\bf S^*}$, choose an arbitrary edge $e_S \in S$
and let $\INCR$ be the set of all such edges. 
For each $e \in \INCR$, let $incr_e=\max\{1-\gamma(S) \mid e=e_S\}$. \revision{That is, $e \in \INCR$ can be the representative of multiple $S \in {\bf S^*}$ and $incr_e$ represents the maximal $1-\gamma(S)$ over all the sets $S$ for which $e$ is the representative $e_S$.}

Let $\gamma'$ be obtained from $\gamma$ as follows.
If $e \in \INCR$ then $\gamma'(e)=\gamma(e)+incr_e$.
Otherwise, $\gamma'(e)=\gamma(e)$.
It is not hard to see that $\gamma'$ satisfies the constraints
$One_S$ for each $S \in {\bf S}$, and that 
$\weight(\gamma') \leq \weight(\gamma)+\sum_{S \in {\bf S^*}} (1-\gamma(S))$. \revision{Since $\gamma'$ does not decrease the weight of any edge, we also observe
$U \subseteq B(\gamma')$.}

Let $\{U_1, \dots, U_a\}$ be all the working subsets of $U$
and let $e_1, \dots, e_a$ be the respective witnessing representatives. \revision{The set of edges $e_1, \dots, e_a$ corresponds to the set $A_U$ from Definition~\ref{def:lpSU}}.
Then the assignment $\gamma''$ of weights is defined as follows.
\begin{enumerate}
\item If there is an $i \in [a]$ such that $e \in W_{U_i}$ \revision{and $e=e_i$}, then
$\gamma''(e)=\gamma'(W_{U_i})=\gamma(W_{U_i})$.%
\item \revision{If $e \in \bigcup \mathbf{S}$, then $\gamma''(e)=\gamma'(e)$.}
\item \revision{Otherwise, $\gamma''(e)=0$}.
\end{enumerate}

Let $W=\bigcup_{i \in [a]} W_{U_i}$.
Note that, by construction, $\gamma'(W)=\gamma''(W)$
and the weights of edges outside $W$ are the same 
under $\gamma'$ and $\gamma''$ and thus, $\weight(\gamma')=\weight(\gamma'')$.
Moreover since $\bigcup {\bf S}$ does not intersect with $W$,
$\gamma''$ satisfies the constraints $One_S$ for all $S \in {\bf S}$.

It remains to show that $\gamma''$ satisfies the constraints $One_u$
for each $u \in U$.
Let $e_1, \dots, e_r$ be the edges of $\bigcup {\bf S}$ containing $u$, let
$\{U_1, \dots, U_b\}$ be the working subsets of $U$ containing $u$,
and let $e'_1, \dots, e'_b$ be the respective witnessing representatives.
As $u \in B(\gamma')$, it follows
that $\sum_{i \in [r]} \gamma'(e_i)+\sum_{i \in [b]} \gamma'(W_{U_i}) \geq 1$.
By construction, $\gamma''(e_i)=\gamma'(e_i)$ for each $1 \leq i \leq r$
and $\gamma''(e'_i)=\gamma'(W_{U_i})$ for each $1 \leq i \leq b$.
Consequently,
 $\sum_{i \in [r]} \gamma''(e_i)+\sum_{i \in [b]} \gamma''(e'_i) \geq 1$.
We conclude that $\gamma''$ satisfies $One_u$.
\end{proof}

\revision{Recall that Lemma~\ref{dnk} states that for integers $n$ and $k$, and unary LP $Z$ with at most $n$ variables and $OPT(Z) > k$, there is an integer $D(n,k)$ such that $OPT(Z) - k > \frac{1}{D(n,k)}$.}
Together with Lemma~\ref{upper} this implies the following corollary.

\begin{corollary} \label{corwitness}
Let $({\bf S},U)$ be a well-formed pair.
Assume that $weight(\gamma) \leq k$ while
$OPT(LP({\bf S},U))>k$.
Let $n=n({\bf S},U)$.  
Then there is an $S^* \in {\bf S}$ with
$1-\gamma(S^*)>\frac{1}{D(n,k)\cdot |{\bf S}|}$. 
In particular this means that ${\bf S^*}$ is not empty
where ${\bf S^*}$ is as in Lemma \ref{upper}.
\end{corollary}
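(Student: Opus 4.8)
The plan is to read $LP(\mathbf{S},U)$ off as a unary linear program of controlled size and then chain Lemma~\ref{dnk}, Lemma~\ref{upper}, and a one-line averaging argument.

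\textbf{Step 1: bounding the number of variables.} First I would observe that $LP(\mathbf{S},U)$ is a \emph{unary} linear program in the sense of Section~\ref{sec:prelim}: its objective is $\sum x_e$, every constraint has the form $\sum_{e \in F} x_e \geq 1$ for some set $F$, and all coefficients lie in $\{0,1\}$. Its variables are indexed by $\bigcup \mathbf{S} \cup A_U$. Since $|\bigcup \mathbf{S}| \leq \sum_{S \in \mathbf{S}} |S|$ and, as noted right after the definition of working subsets, the sets $W_{U'}$ are indexed by the (non-empty) subsets $U' \subseteq U$, so $|A_U| \leq 2^{|U|}$, the program $LP(\mathbf{S},U)$ has at most $\sum_{S \in \mathbf{S}} |S| + 2^{|U|} = n(\mathbf{S},U) = n$ variables.

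\textbf{Step 2: the optimum is bounded away from $k$.} Since $\OPT(LP(\mathbf{S},U)) > k$ by hypothesis, applying Lemma~\ref{dnk} with this value of $n$ gives $\OPT(LP(\mathbf{S},U)) > k + \frac{1}{D(n,k)}$. Next, let $\alpha$ be an optimal solution of $LP(\mathbf{S},U)$, so that $\weight(\alpha) = \OPT(LP(\mathbf{S},U))$. Combining Lemma~\ref{upper} with the assumption $\weight(\gamma) \leq k$ yields
\[
  \OPT(LP(\mathbf{S},U)) = \weight(\alpha) \leq \weight(\gamma) + \sum_{S \in \mathbf{S^*}} (1 - \gamma(S)) \leq k + \sum_{S \in \mathbf{S^*}} (1 - \gamma(S)).
\]
Putting the two estimates together gives $\sum_{S \in \mathbf{S^*}} (1 - \gamma(S)) > \frac{1}{D(n,k)}$; in particular this sum is strictly positive, so $\mathbf{S^*} \neq \emptyset$, which already settles the last sentence of the corollary.

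\textbf{Step 3: averaging.} Since $\mathbf{S^*} \subseteq \mathbf{S}$ has at most $|\mathbf{S}|$ elements, by pigeonhole there is some $S^* \in \mathbf{S^*}$ with $1 - \gamma(S^*) > \frac{1}{D(n,k)\cdot |\mathbf{S^*}|} \geq \frac{1}{D(n,k)\cdot |\mathbf{S}|}$, and this $S^*$ lies in $\mathbf{S^*} \subseteq \mathbf{S}$, as required. I do not expect a genuine obstacle here: the corollary is essentially bookkeeping on top of the two lemmas. The only points needing care are that the variable count in Step 1 matches the definition $n(\mathbf{S},U) = \sum_i |S_i| + 2^{|U|}$ exactly, so that Lemma~\ref{dnk} is invoked with the correct parameter, and the direction of the averaging inequality, where $|\mathbf{S^*}| \leq |\mathbf{S}|$ is used to relax the per-element bound from $1/(D(n,k)|\mathbf{S^*}|)$ to $1/(D(n,k)|\mathbf{S}|)$.
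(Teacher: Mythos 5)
Your proof is correct and takes essentially the same route as the paper's: bound the number of variables of $LP(\mathbf{S},U)$ by $n(\mathbf{S},U)$, combine Lemma~\ref{dnk} with Lemma~\ref{upper} to obtain $\sum_{S\in\mathbf{S^*}}(1-\gamma(S))>1/D(n,k)$, and then average over $\mathbf{S^*}\subseteq\mathbf{S}$. The only difference is that you spell out the variable-count bookkeeping in Step 1, which the paper asserts without detail.
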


\begin{proof}
Note that the number of variables of $LP({\bf S},U)$ is at most $n$.
It follows from the combination of Lemma \ref{dnk} and Lemma \ref{upper}
that $\weight(\gamma)+\sum_{S \in {\bf S^*}} (1-\gamma(S))>k+1/D(n,k)$ and,
since $\weight(\gamma) \leq k$,
$\sum_{S \in {\bf S^*}} (1-\gamma(S))>1/D(n,k)$ and hence there is $S^* \in {\bf S^*}$
with $(1-\gamma(S^*)) > \frac{1}{D(n,k) \cdot|{\bf S^*}|} \geq \frac{1}{D(n,k) \cdot |{\bf S}|}$.
\end{proof}

\revision{
We are now ready to prove Lemma~\ref{lemboundext}. Let us first recall the lemma and note that the first component of the well-formed pair in the statement is non-empty and its elements contain at most $c-1$ edges.
\lemboundext*
}
\begin{proof}[Proof of Lemma~\ref{lemboundext}]
If the value of the optimal solution of $LP({\bf S},U)$ is at most $k$,
we are done by Lemma~\ref{properf}.

Otherwise, let $S^* \in {\bf S}$ be as in Corollary \ref{corwitness}.
Let $\epsilon=(D(n,k)\cdot |{\bf S}|)^{-1}$.
It follows from Corollary~\ref{corwitness} that 
vertices of $B(\gamma) \cap \bigcap S^*$ need weight contribution of at least
$\epsilon$ from hyperedges of $H$ other than $S^*$.
We define the extending set $S'$ as the set of all hyperedges of 
$H$ other than $S^*$ whose weight is at least $\epsilon/2c$ and therefore $|S'|\leq 2ck / \epsilon$.
\revision{We observe that the set $L$ of light vertices (cf., Definition~\ref{defext}) is the }subset
of $B(\gamma) \cap \bigcap S^*$ consisting of all vertices $x$ that, besides
$S^*$ are contained only in hyperedges of weight smaller than $\epsilon/2c$.
By Lemma~\ref{smalledges}, $|L| \leq f(c,d,k)$ and the size of $S^*$ is \revision{at most $c-1$ by assumption.}
It is not hard to see that the size of the resulting extension is bounded
as well.
\end{proof}

 \subsection{Proof of Theorem \ref{seqsize}} \label{sec:length}

For this theorem, rather than considering 
a well-formed pair $({\bf S},U)$ itself we consider
the pair $(A,b)$ where $A$ is the multiset of sizes of the sets
of ${\bf S}$ and $|U|=b$. We call $(A,b)$ a \emph{bare bones $c$-pair} ($c$-BBP). 
A transformation of $({\bf S},U)$ is translated into a bounded size transformation
of $({A},b)$.  In the next five definitions we formalize this intuition.
Then we state Theorem \ref{barebound} claiming that  a sufficiently
long sequence of bounded transformations of $c$-BBPs results in one
where the first component is empty. This will imply Theorem \ref{seqsize}
because a $c$-BBP with the empty first components is translated
back into well-formed pair with the empty first component which is 
perfect.  Finally, we prove Theorem \ref{barebound}.

\begin{definition} \label{barebones}
A \emph{bare bones $c$-pair}, abbreviated as $c$-BBP
is a pair $(A,b)$ where $A$ is a multiset of integers in the range $[1,c]$
and $b$ is just a non-negative integer.
We denote $2^b+\sum_{x \in A} x$ by $n(A,b)$.
Note that the number of occurrences of each
$x \in A$ in the sum is its multiplicity in $A$.
\end{definition}

\begin{definition}
Let $(A,b)$ be a $c$-BBP and assume that $c \in A$.
Let $A'=A \setminus \{c\}$ (that is, the multiplicity of $c$
in $A$ is reduced by one) and let $b'=b+d$
where $d$ is a non-negative integer. 
Clearly $(A',b')$ is a $c$-BBP, we refer to it as a \emph{folding}
of $(A,b)$.
\end{definition}

\begin{definition}
Let $(A,b)$ be a $c$-BBP and let $x \in A$ such that $x<c$.
Let $A'$ be obtained from $A$ by removal of one occurrence of $x$ 
and adding $d_1$ occurrences of $x+1$ for some non-negative integer $d_1$.
Let $b'=b+d_2$ for some non-negative integer $d_2$.
Clearly $(A',b')$ is a $c$-BBP, we refer to it as an \emph{extension}
of $(A,b)$
\end{definition}

\begin{definition}
Let $(A,b)$ and $(A',b')$ be $c$-BBPs such that $(A',b')$
is either a folding or an extension of $(A.b)$. We then say that
$(A',b')$ is a \emph{transformation} of $(A,b)$.
Let $n=n(A,b)$ and $n'=n(A',b')$ and suppose that
$n' \leq g(n)$ for some function $g$. We then say that
$(A',b')$ is a $g$-\emph{transformation} of $(A,b)$.
\end{definition}

\begin{definition}
Let $g$ be a function of one argument
and let $(A_1,b_1), \dots, (A_r,b_r)$ be a sequence of $c$-BBPs
such that for each $2 \leq i \leq r$, the $c$-BPP
$(A_i,b_i)$ is a $g$\nobreakdash-transformation of $(A_{i-1},b_{i-1})$.
We call $(A_1,b_1), \dots, (A_r,b_r)$ a $g$-transformation sequence.
Note that for each $1 \leq i <r$, $A_i$ is not empty for otherwise,
it is impossible to apply a transformation to $(A_i,b_i)$.
\end{definition}

\begin{theorem} \label{barebound}
Let $g$ be a function of one argument.
Then there is a function $h[g]$ such
that if $(A_1,b_1), \dots, (A_r,b_r)$ is a $g$-transformation
sequence then $r \leq h[g](n)$ where $n=n(A_1,b_1)$.
\end{theorem}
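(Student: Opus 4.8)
The plan is to prove this by induction on $c$, with a potential-function (or ``measure'') argument inside each inductive step. Fix $g$ and let $n = n(A_1,b_1)$. First I would observe the crucial invariant of a $g$-transformation sequence: since each step is a $g$-transformation, we have $n(A_{i+1},b_{i+1}) \le g(n(A_i,b_i))$, so by monotonicity we may assume $g$ monotone (replace $g$ by $x \mapsto \max_{i\le x} g(i)$) and conclude $n(A_i,b_i) \le g^{(i-1)}(n)$ for all $i$. The difficulty is that this crude bound is useless on its own since $r$ is exactly what we are bounding; we need to bound $r$ \emph{first}, and then the size bound $h(c,d,k) = \transf^{\slfunc(\cdots)}(\cdots)$ used in the proof of Theorem~\ref{maintheor} follows. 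So the real content is: the \emph{length} $r$ of any such sequence is bounded purely in terms of $n$ (and $c$, which is fixed), \emph{independently} of how large the intermediate $n(A_i,b_i)$ grow.

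The base case is $c = 1$: here every multiset $A$ contains only $1$'s, no extension is possible (extensions require $x < c$), and a folding strictly decreases $|A_i|$ while only increasing $b$; but wait — in the $c=1$ case $|A_i|$ is bounded by $n$ initially, and each folding removes one element, so $r \le n$, done. For the inductive step, assume the claim for $c-1$ and consider a $c$-BBP sequence. The key idea: track the multiplicity $m_i$ of the value $c$ in $A_i$. A folding decreases $m_i$ by one; an extension on an element of value $x \le c-1$ either leaves $m_i$ unchanged (if $x < c-1$) or, when $x = c-1$, can \emph{increase} $m_i$ by up to $d_1$ new copies of $c$. The obstacle is exactly this: extensions can pump up the number of size-$c$ sets, so foldings alone cannot be the measure.

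To handle this I would set up a lexicographic potential. Informally, view the sub-multiset of values $\{1,\dots,c-1\}$ as a $(c-1)$-BBP and apply the induction hypothesis to ``charge'' the number of extensions occurring at each value level; meanwhile, each size-$c$ element, once created, survives until it is folded, and foldings move mass out of the system (into $b$). Concretely, I would argue: between any two consecutive foldings, the number of extensions is bounded because each extension makes progress on the ``lower'' $(c-1)$-BBP obtained by deleting all copies of $c$, and by the induction hypothesis (applied with an appropriately adjusted transformation bound $g'$ derived from $g$) that lower BBP admits only $h[g'](n')$ transformations, where $n'$ is bounded via $g$ in terms of $n$ and the current bound on sequence length so far. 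Since the number of foldings is itself bounded by the total number of size-$c$ elements ever created, and each batch of creations happens at an extension that is itself counted, one closes the recursion. The cleanest formalization is probably: let $F_i$ = number of foldings in the first $i$ steps and $E_i$ = number of extensions; show $E_i \le (\text{bound on extensions at levels} \le c-1) \cdot (F_i + 1)$ by induction hypothesis, and $F_i \le n + d\cdot(\text{something bounded})$ — but since $d$ is not fixed here, I must instead bound $F_i$ by the number of size-$c$ elements present, which is $\le n(A_1,b_1) + (\text{copies created by extensions at level } c-1) \le n + E_i \cdot (\text{per-extension bound})$. Combining the two inequalities $E_i \le C_1(F_i+1)$ and $F_i \le n + C_2 E_i$ (where $C_1, C_2$ are functions of $n$ and $g$ only, via the induction hypothesis and monotonicity of $g$) forces both $E_i$ and $F_i$, hence $r = E_r + F_r + 1$, to be bounded by a function $h[g](n)$, as desired.

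The main obstacle, as flagged, is making the two-sided inequality close: one must ensure the ``per-extension creation bound'' and the induction-hypothesis bound are stated in terms of quantities that are already controlled (namely $n$ and $g$), not in terms of the a priori unbounded $n(A_i,b_i)$. This is exactly where monotonicity of $g$ and the telescoping $n(A_i,b_i)\le g^{(i-1)}(n)$ are used: any quantity that depends on the current BBP size can be re-expressed as depending on $n$ and the number of steps so far, turning the estimate into a genuine recurrence in $i$ that one then solves. I would also double-check the edge cases where $d_1 = 0$ (an extension that deletes an element and adds nothing — this only \emph{helps}, strictly decreasing the measure) and where $b$ grows unboundedly (it does not matter, since $b$ only appears through $2^b$ in $n(A,b)$, which is already absorbed into the size bound but plays no role in the \emph{length} argument — foldings and extensions on the $A$-component are what drive termination).
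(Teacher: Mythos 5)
Your overall architecture is right in spirit: induction, stratify the transformations, control intermediate sizes via the telescoping $n(A_i,b_i)\le g^{(i-1)}(n)$, and then solve a recurrence. But you project in the wrong direction, and that is where the argument genuinely fails to close. You propose to segment the sequence at \emph{foldings} and charge the extensions in between to the inductive hypothesis on the ``lower'' $(c-1)$-BBP obtained by deleting all copies of $c$. The trouble is that the number of foldings is not bounded by any fixed function of $n$ and $g$: an extension on a size-$(c-1)$ element may create $d_1$ fresh copies of $c$, and $d_1$ is constrained only by $n(A_{i+1},b_{i+1})\le g(n(A_i,b_i))$, so the per-batch creation depends on the (unbounded) current size and hence on the step index $i$. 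Your two inequalities therefore really read $E_i \le C_1(i)(F_i+1)$ and $F_i \le n + C_2(i)E_i$ with $C_1(i),C_2(i)$ growing in $i$; substituting gives $E_i\bigl(1 - C_1(i)C_2(i)\bigr)\le C_1(i)(n+1)$, which yields nothing once $C_1C_2\ge 1$. Re-expressing the coefficients via the telescoping bound does not fix this, because that just makes the circularity explicit: you need a bound on the number of steps to bound the coefficients, and a bound on the coefficients to bound the number of steps. The fixed point $E\le f(g(E))$ you are implicitly aiming for is satisfied by every $E$.

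The paper closes the recursion by stratifying in the opposite direction. It considers $(g,i)$-transformation sequences in which only foldings and extensions on sizes $c-1,c-2,\dots,c-i$ are allowed, and uses the key invariant that within a $(g,i)$-sequence \emph{size-$(c-i)$ elements are never created} (they could only arise from an extension on $c-i-1$, which is disallowed at level $i$). Hence the number of extensions on the currently smallest admissible size is bounded by the \emph{initial} multiplicity, which is $\le n$ -- a fixed, non-growing outer count. Between two such extensions one has a $(g,i-1)$-sequence, whose length is bounded inductively; the growth of intermediate sizes is then absorbed into a recursively defined bound $f_j(x)=f_{j-1}(x)+h_{i-1}\bigl(g^{(f_{j-1}(n))}(x)\bigr)$, and $r\le f_{n+1}(n)$. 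Equivalently, if you insist on your induction on $c$, the correct projection is to delete the size-$1$ elements and shift all sizes down by one, segmenting the sequence at extensions on size $1$ (of which there are at most $n$, since size-$1$ elements are only ever destroyed), rather than deleting the size-$c$ elements and segmenting at foldings. Your base case $c=1$ is fine and matches the paper's $h_0$, but the inductive step needs this reversal to have a bounded outer loop.
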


We first show how to prove Theorem~\ref{seqsize} using
Theorem~\ref{barebound} and then we will prove Theorem~\ref{barebound}
itself. \revision{We first recall the theorem.
\seqsizethm*
}
\begin{proof}[Proof of Theorem~\ref{seqsize}]
Let $({\bf S},U)$ be a well-formed pair and let $bbp({\bf S},U)$ be $(A,b)$ where
$A$ is the multiset of sizes of elements of ${\bf S}$
(each $x$ occurs in $A$ exactly the number of times as there are
sets of size $x$ in ${\bf S}$) and $b=|U|$.
It is not hard to see that $(A,b)$ is a $c$-BBP. 
Moreover,
\begin{equation} \label{eqbare}
n({\bf S},U)=n(A,b)
\end{equation} 
Let $({\bf S_1},U_1), \dots, ({\bf S_r},U_r)$ be a 
transformation sequence.
Let $(A_1,b_1), \dots, (A_r,b_r)$ be a sequence of $c$-BBPs such that
$(A_i,b_i)=bbp({\bf S_i},U_i)$ for each $1 \leq i \leq r$.

We are going to show that 
$(A_1,b_1), \dots, (A_r,b_r)$ is a $\transf$-transformation sequence.
By Theorem \ref{barebound}, this will imply that
$r \leq h[\transf](n)$
where $n=n(A_1,b_1)=n({\bf S_1},U_1)$ by \eqref{eqbare}
thus implying the theorem. 

So, consider two arbitrary consecutive elements $(A_i,b_i)$ and
$(A_{i+1},b_{i+1})$.

Assume first that $({\bf S_{i+1}},U_{i+1})$ is obtained from
$({\bf S_i},U_i)$ by folding. It is not hard to see that
$(A_{i+1},b_{i+1})$ is obtained from $(A_i,b_i)$ by removing
one occurrence of $c$ and adding $b_{i+1}=b_{i}+(|U_{i+1}|-|U_i|)$.
That is $(A_{i+1},b_{i+1})$ is obtained from $(A_i,b_i)$ as result
of folding. 
As $n({\bf S_{i+1}}, U_{i+1}) \leq \transf(n({\bf S_i},U_i))$, 
it follows from \eqref{eqbare} that 
$n({A_{i+1}}, b_{i+1}) \leq \transf(n({A_i},b_i))$.
We conclude that $(A_{i+1},b_{i+1})$ is obtained from $(A_i,b_i)$
as a result of a $\transf$-transformation.

Assume now that $({\bf S_{i+1}},U_{i+1})$ is obtained from
$({\bf S_i},U_i)$ by extension. This means that 
${\bf S_{i+1}}$ is obtained from ${\bf S_i}$ by removal of
some $S^*$ of size less than $c$ and replacing it with
$d_1$ sets of size $c+1$ for some integer $d_1 \geq 0$.
Also $U_{i+1}$ is obtained from $U_i$ by adding $d_2$
new elements for some integer $d_2 \geq 0$. 
It follows by construction that $(A_{i+1},b_{i+1})$
is an extension of $(A_i,b_i)$. By the same argumentation as
in the end of the previous paragraph, we conclude that
$(A_{i+1},b_{i+1})$ is obtained from $(A_i,b_i)$ by 
$\transf$-transformation.
\end{proof}

\begin{proof}[Proof of Theorem~\ref{barebound}]
We assume w.l.o.g. that $g$ is monotone that is
for $n_1<n_2$ $g(n_1) \leq g(n_2)$.
Indeed, otherwise, since $g$ is defined over non-negative integer,
we can define $g^*(n)$ as the maximum over $g(0), \dots, g(n)$ and
use $g^*$ instead of $g$. The monotonicity allows us to
derive the following inequality.

Suppose that
$(A_1,b_1), \dots, (A_x,b_x)$ is a $g$-transformation sequence
and $x \leq y$. Then
\begin{equation} \label{grow1}
n(A_x,b_x) \leq g^{(y)}(n(A_1,b_1))
\end{equation}
For $i \in \{0, \dots, c-1\}$, a $g$-transformation
is subset of $q$-transformations with an additional
property recursively defined as follows.

\begin{enumerate}
\item $(A',b')$ is a $(g,0)$-transformation of $(A,b)$
if $(A',b')$ is obtained from $(A,b)$ by folding. 
\item Suppose $i>0$ and $(g,i-1)$-transformation has been defined
Then $(A',b')$ is a $(g,i)$-transformation of $(A,b)$
if it is either a $(g,i-1)$-transformation or an extension
where the element removed from $A$ is \revision{an occurrence of} $c-i$. 
\end{enumerate}

A $(g,i)$-transformation sequence $(A_1,b_1), \dots, (A_r,b_r)$
where for each $2 \leq j \leq r$, $(A_j,b_j)$ is obtained from
$(A_{j-1},b_{j-1})$ by $(g,i)$-transformation.
The sequence is \emph{final} if all elements of $A_r$ are smaller than $c-i$,
that is no further extension of the sequence is possible.

We prove by induction that for each $i \in \{0, \dots, c-1\}$,
there is a function $h_i[g]$ such that $r$ as above is at most
$h_i[g](n(A_1,b_1))$. Then $h_{c-1}[g]$ will be the desired function
$h[g]$. For the sake of simplicity, we will omit $g$ in the square brackets
and refer to these functions as $h_0, \dots, h_{c-1}$.

\revision{It is important to observe that our induction is from above, in the sense that we with an increase in $i$ we allow for the removal of occurrences of lower values. As a consequence, a $(g,i)$-transformation of $(A,b)$ can never increase the number of occurrences $c-i$ in $A$: a folding only removes values from $A$ and an extension can only introduce $c-i$ by removing occurrences of $c-i-1$, which is not permitted in a $(g,i)$-transformation. It follows that in any $(g,i)$-transformation sequence starting at $(A_1,b_1)$, there can be at most $m$ extensions that remove an occurrence of $c-i$, where $m$ is the number of $c-i$'s in $A_1$. This observation is key to the following argument.}

The existence of function $h_0$ is easy to observe.
Indeed, the number of consecutive foldings is at most
the multiplicity of $c$ in $A_1$. So, we can put $h_0=n(A_1,b_1)$.

Assume now that $i>0$ and that 
$(A_1,b_1), \dots, (A_r,b_r)$ is a $(g,i)$-transformation sequence.
If it is in fact a $(g,i-1)$-transformation sequence then 
$r \leq h_{i-1}(A_1,b_1)$ by the induction assumption.
Otherwise, let $1<x_1< \dots < x_a \leq r$ be all the indices 
such that for each $1 \leq j \leq a$, $(A_{x_j},b_{x_j})$ is obtained
from $(A_{x_j-1},b_{x_j-1})$ by extension removing an element $c-i$.

For the sake of succinctness, denote $n(A_1,b_1)$ by $n$
and for each $1 \leq j \leq a$, we denote $n(A_{x_j},b_{x_j})$ by $n_j$.
 
For each integer $j \geq 1$, define function $f_j$ as follows.
$f_1(x)=h_{i-1}(x)+1$.
Suppose that $j>1$ and that $f_{j-1}$ has been defined.
Then $f_j(x)=f_{j-1}(x)+h_{i-1}(g^{(f_{j-1}(n))}(x))$.

We show that for each $1 \leq j \leq a$,
$x_j \leq f_j(n)$. 
Note that $(A_1,b_1), \dots$, $(A_{x_1-1}, b_{x_1-1})$ is
a $(g,i-1)$-transformation sequence. Hence, by the induction
assumption, $x_1-1 \leq h_{i-1}(n)$ and $x_1 \leq f_1(n)$.

Furthermore, let $j>1$.
Then $(A_{x_{j-1}},b_{x_{j-1}}), \dots, (A_{x_j-1},b_{x_j-1})$
is also a $(q,i-1)$-transformation sequence.
Therefore, by the induction assumption,
$x_j \leq x_{j-1}+h_{i-1}(n_{j-1})$.
By the induction assumption, $x_{j-1} \leq f_{j-1}(n)$ and, 
by \eqref{grow1}, $n_{j-1} \leq g^{(f_{j-1}(n))}(n)$.
Therefore, 
$x_j \leq f_{j-1}(n)+h_{j-1}(g^{(f_{j-1}(n))}(n))=f_j(n)$ as required.
Applying the same argumentation to the sequence following
$(A_{x_a},b_{x_a})$, we conclude that $r \leq f_{a+1}$.
\revision{As noted above, $(g,i)$-transformations cannot introduce new occurrences of $c-i$ and thus $a$ is at most the number of occurrences of $c-i$ in $(A_1,b_1)$. We can generously bound $a$ by $n$ and conclude that $r \leq f_{n+1}$.}
Hence, we can set $h_i=f_{n+1}$.
\end{proof}

\section{Applications and Extensions}
\label{sec:apps}

\subsection{Checking Fractional Hypertree Width}
\label{sec:fhw}

Now that our main combinatorial result has been established we move
our attention to an algorithmic application of the support bound. In
particular, we are interested in the problem of deciding whether for an
input hypergraph $H$ and constant $k$ we have $\fhw(H) \leq k$. 
The
problem is known to be \np-hard even for $k=2$~\cite{DBLP:conf/pods/FischlGP18}. However, as
noted in the introduction, it has recently been shown that for hypergraph classes which enjoy bounded intersection
or bounded degree, it is indeed tractable to check $\fhw(H) \leq k$ for
constant $k$~\cite{JACM2021}.

\revision{Here we show that our main combinatorial result reveals a large class of instances, that subsumes and extends all previously known cases, for which checking $fhw$ is tractable.} To establish the result we make use of the framework for tractable width checking developed in~\cite{JACM2021}.
We will only recall the
necessary key components here 
and use them in a black-box fashion. 

\begin{definition}
  Let $\rho_q^*(U)$ be the minimal weight of an assignment $\gamma$  such that $U \subseteq B(\gamma)$ and $|\supp(\gamma)| \leq q$.
  We define the $q$-\emph{limited} fractional hypertree width of a hypergraph $H$ as its $\rho^*_q$-width.
\end{definition}

\begin{lemma}[Theorem 4.5 \& Lemma 6.5 in \cite{JACM2021}]
  \label{boundedfhw}
  Fix $c$, $d$, and $q$ as constant integers. There is a polynomial-time algorithm testing whether a given
  $(c,d)$-hypergraph has $q$-limited fractional hypertree width at
  most $k$.
\end{lemma}

The underlying intuition of $q$-limited $\fhw$ is that the bounded
support allows for a polynomial-time enumeration of all the
(inclusion) maximal covers of sufficient weight. For $(c,d)$-hypergraphs, it
is then possible to compute a set of candidate bags such that a
fitting tree decomposition, if one exists, uses bags only from this
set. Deciding whether a tree decomposition can be created from a given
set of candidate bags is tractable under some minor restrictions to
the structure of the resulting decomposition (not of any concern to
the case discussed here).

\revision{
Recall, a class ${\cal C}$ of hypergraphs is said to satisfy the {\em bounded multi-intersection property (BMIP)} if there exist
$c \geq 2$ and $d \geq 0$, such that every $H$ in ${\cal C}$ is a $(c,d)$-hypergraph.}
We now apply our main result and show that, under BMIP, there exists a
constant $q$ such that the $q$-limited fractional hypertree width
always equals fractional hypertree width. From the previous lemma
it is then straightforward to arrive at the desired tractability result.

\begin{theorem} \label{mainres}
There is a polynomial-time algorithm for testing whether the $\fhw$ of the given
$(c,d)$-hypergraph $H$ is at most $k$ (the degree of the polynomial is upper bounded
by a fixed function depending on $c,d,k$).
\end{theorem}
\begin{proof}
  It follows from Theorem~\ref{maintheor}
that if $\fhw(H) \leq k$ for a $(c,d)$-hypergraph $H$
then the $h(c,d,k)$-limited $\fhw$ of $H$ is also
at most $k$.

Indeed, let $\tdecomp$ be a tree decomposition
with $\fhw$ at most $k$. Then, according to Theorem~\ref{maintheor},
for each node $u$ in $T$ there is an edge weight function $\gamma$  with $|\supp(\gamma)|\leq h(c,d,k)$ such that $B_u \subseteq B(\gamma)$.
In other words, it follows that $\tdecomp$
has $\rho^*_q$-width at most $k$ where $q$ is $h(c,d,k)$.
Thus, $H$ also has $h(c,d,k)$-limited
fractional hypertree width at most $k$. \revision{For completeness of the procedure, note that the $h(c,d,k)$-limited fractional hypertree width can never be lower than $\fhw(H)$.}

To test whether $\fhw(H) \leq k$
it is therefore enough to test whether the $h(c,d,k)$-limited $\fhw$ of $H$ is at most $k$. 
This can be done in a polynomial time according to Lemma~\ref{boundedfhw}.
\end{proof}

\subsection{Extension to Fractional Hitting Set}
\label{sec:duals}
\label{sec:vc}

There are two natural dual concepts of fractional edge covers. One is the notion of \emph{fractional hitting sets} which is dual in the sense that it is equivalent to the fractional edge cover on the dual hypergraph. The other, \emph{fractional independent sets}, corresponds to the dual linear program of a linear programming formulation of fractional edge covers. Here we discuss how our results extend to hitting sets. %

We start by giving a formal definition of the fractional hitting set problem. Let $H = (V,E)$ be a hypergraph and $\beta : V \to [0,1]$ be an assignment of weights to the vertices of $H$. Analogous to the definition of fractional edge covers we define
\begin{itemize}
\item $B_v(\beta) = \{e \in E \mid \sum_{v \in e} \beta(v) \geq 1\}$,
\item $\vsupp(\beta) = \{v \in V \mid \beta(v) > 0\}$,
\item and $\weight(\beta) = \sum_{v \in V} \beta(v)$.
\end{itemize}

A fractional hitting set is also called a fractional transversal in some
contexts (cf.~\cite{fractionalGraphTheory}). For a set of edges $E'$,
we denote the weight of the minimal fractional hitting set $\beta$
such that $E' \subseteq B_v(\beta)$ as $\tau^*(E')$. For
hypergraph $H=(V,E)$, we say $\tau^*(H) = \tau^*(E)$. Recall, that
we assume reduced hypergraphs and therefore there is a one-to-one
correspondence of vertices in $H$ and edges in $H^d$. We will make use
of the following straightforward observations about the connection of what we will
call \emph{dual weight assignments}.
\begin{proposition}
  \label{prop:dualcover}
  Let $H=(V,E)$ be a (reduced) hypergraph and let $H^d=(W,F)$ be its dual. We write $f_v$ to identify the edge in $F$ that corresponds to the vertex $v$ in $V$. The following two statements hold:
  \begin{itemize}
  \item   For every $\gamma: E \to [0,1]$ and the function $\beta : W \to [0,1]$ with $\beta(e)=\gamma(e)$ it holds that
  $B_v(\beta) = \{f_v \mid v \in B(\gamma)\}$\footnote{Recall that the edges $E$ of $H$ are the vertices $W$ of $H^d$.}.
\item For every $\beta : V \to [0,1]$ and the function $\gamma: F \to [0,1]$ with $\gamma(f_v) = \beta(v)$ it holds that
  $B(\gamma) = \{v \mid f_v \in B_v(\beta)\}$.
  \end{itemize}
\end{proposition}

For the hitting set, a more specific version of our main result is already known.
This result is due to Zolt{\'a}n F{\"u}redi~\cite{furedi1988}, who 
extended earlier results by Chung et al.~\cite{chung1988}. 
Recall that a hypergraph $H$ with rank $r$ is
also a $(1,r)$-hypergraph, i.e., this can be considered a special case of our setting. Furthermore, note that the statement holds only for weight minimal hitting sets.

\begin{proposition}[\cite{furedi1988}, page 152, \ Proposition 5.11.(iii)]\label{prop:furedi}
For every hypergraph $H$ of rank (i.e., maximal edge size) $r$, and every fractional hitting set $w$ for $H$  satisfying  $\weight(w)= \tau^*(H)$, the 
property $|\vsupp(w)| \leq r\cdot \tau^*(H)$ holds.
\end{proposition}

In the following we will extend Theorem~\ref{maintheor} to an analogous
statement for fractional hitting sets thereby generalizing the
previous proposition significantly. To derive the result we need a
final observation about $(c,d)$-hypergraphs. In a sense, we show that bounded multi-intersection is its own dual property.

\begin{lemma}
  \label{lem:dualbmip}
  Let $H$ be a $(c,d)$-hypergraph. Then the dual hypergraph $H^d$ is a $(d+1,c-1)$-hypergraph.\footnote{Note that the superscript of $H^d$  only signifies that it is the dual of $H$. It is not connected to the integer constant $d$ used for the multi-intersection width of $H$.}
\end{lemma}
\begin{proof}
  Let $G=(V \cup E,A)$ be the incidence graph of $H$.
  $H$ being a $(c,d)$-hypergraph is equivalent to $G$ not having $K_{c,d+1}$ as a subgraph, with
  $c$ vertices taken from $E$ and $d+1$ vertices taken from $V$.
  As the incidence graph $G^d=(W \cup F,B)$ of $H^d$ is isomorphic to $G$, with vertices and edges changing sides,
  we conclude that $G^d$ does not have $K_{d+1,c}$ as a subgraph with $d+1$ vertices taken from $F$ and
  $c$ vertices taken from $W$. This is equivalent to saying that $H^d$ is a $(d+1,c-1)$-hypergraph. 
\end{proof}

\begin{theorem}
  \label{thm:vcsupp}
  There is a function $h(c,d,k)$ such that the following is true.
  Let $c,d,k$ be constants.
  Let $H$ be a $(c,d)$-hypergraph and $\beta$ be an assignment of weights
  to $V(H)$. Assume that $weight(\beta) \leq k$.
  Then there is an assignment $\nu$ of weights to $V(H)$
  such that $weight(\nu) \leq k$, $B_v(\beta) \subseteq B_v(\nu)$
  and $|\vsupp(\nu)| \leq h(c,d,k)$.
\end{theorem}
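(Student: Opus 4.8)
The plan is to reduce Theorem~\ref{thm:vcsupp} to the already-established Theorem~\ref{maintheor} by passing to the dual hypergraph. First I would recall that, by Lemma~\ref{lem:dualbmip}, if $H = (V,E)$ is a $(c,d)$-hypergraph, then its dual $H^d = (W,F)$ is a $(d+1,c)$-hypergraph. Since the theorem only speaks of existence of a function $h$, we are free to choose $h$ to be whatever function Theorem~\ref{maintheor} produces when applied with the parameters $(d+1, c, k)$ in place of $(c,d,k)$; a mild abuse of the letter $h$, but harmless.

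The core argument then runs as follows. Given the vertex weight assignment $\beta \colon V \to [0,1]$ with $\weight(\beta) \leq k$, use Proposition~\ref{prop:dualcover} to transfer $\beta$ into an edge weight function on the dual: concretely, for each $v \in V$ let $f_v \in F$ be the dual edge corresponding to $v$, and define $\gamma \colon F \to [0,1]$ by $\gamma(f_v) = \beta(v)$. Then $\weight(\gamma) = \weight(\beta) \leq k$, and by Proposition~\ref{prop:dualcover} we have $B(\gamma) = \{\, v \mid f_v \in B_v(\beta)\,\}$, i.e.\ the set of vertices of $H^d$ covered by $\gamma$ corresponds exactly, under $v \mapsto f_v$, to the set of edges of $H$ covered by $\beta$. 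Now apply Theorem~\ref{maintheor} to the $(d+1,c)$-hypergraph $H^d$ and the weight function $\gamma$: this yields $\nu' \colon F \to [0,1]$ with $\weight(\nu') \leq k$, $B(\gamma) \subseteq B(\nu')$, and $|\supp(\nu')| \leq h(d+1,c,k)$. Finally, transfer $\nu'$ back to a vertex weight function $\nu \colon V \to [0,1]$ on $H$ by setting $\nu(v) = \nu'(f_v)$. Using the first part of Proposition~\ref{prop:dualcover} (or just re-reading the correspondence), $\weight(\nu) = \weight(\nu') \leq k$, the containment $B(\gamma) \subseteq B(\nu')$ translates to $B_v(\beta) \subseteq B_v(\nu)$, and $\vsupp(\nu)$ corresponds precisely to $\supp(\nu')$, so $|\vsupp(\nu)| \leq h(d+1,c,k)$. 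Setting $h(c,d,k) := h_{\ref{maintheor}}(d+1,c,k)$ finishes the proof.

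There is essentially no hard combinatorial obstacle here: the whole difficulty has already been absorbed into Theorem~\ref{maintheor} and into Lemma~\ref{lem:dualbmip}. The only points requiring a little care are (i) checking that the hypergraph one feeds into Theorem~\ref{maintheor} has no isolated vertices or empty edges, so that "$H^d$" and the standing reducedness conventions make sense — this is fine since the dual of a hypergraph satisfying our conventions again satisfies them, and any degenerate cases can be handled by passing to the reduced hypergraph without affecting covers; and (ii) being pedantically correct about the direction of the two bijections in Proposition~\ref{prop:dualcover}, so that "$B(\gamma) \subseteq B(\nu')$ on $H^d$" really does become "$B_v(\beta) \subseteq B_v(\nu)$ on $H$" and not the reverse inclusion. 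Both are routine bookkeeping. I would present the proof in three short displayed steps (transfer $\beta \to \gamma$, invoke Theorem~\ref{maintheor}, transfer $\nu' \to \nu$), each accompanied by one sentence citing Proposition~\ref{prop:dualcover} for the relevant identity of covered sets and supports.
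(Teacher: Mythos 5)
Your proof is correct and matches the paper's argument essentially line for line: transfer $\beta$ to a dual edge weight $\gamma$ via Proposition~\ref{prop:dualcover}, note $H^d$ is a $(d+1,c)$-hypergraph by Lemma~\ref{lem:dualbmip}, apply Theorem~\ref{maintheor} to obtain $\nu'$, and dualize back. The extra remarks on reducedness and the direction of the correspondence are sound but do not change the substance; the approach is the same.
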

\begin{proof}
  Let $\gamma$ be the dual weight assignment of $\beta$ as in Proposition~\ref{prop:dualcover}. That is, $\gamma : F \to [0,1]$ is an
  edge weight assignment in the dual hypergraph $H^d = (W,F)$ with $|\supp(\gamma)|=|\vsupp(\beta)|$ and $\weight(\gamma) = \weight(\beta)$.

  From Lemma~\ref{lem:dualbmip} we have that $H^d$ is a
  $(d+1,c-1)$-hypergraph and thus by Theorem~\ref{maintheor} there is an
  edge weight function $\nu'$ with $B(\gamma) \subseteq B(\nu')$ and
  $|\supp(\nu')| \leq h'(d+1,c-1,k)$.
  Let $\nu$ now be the dual weight assignment of $\nu'$. By Proposition~\ref{prop:dualcover} we then see that also $B_v(\beta) \subseteq B_v(\nu)$ and $|\vsupp(\nu)|=|\supp(\nu')|\leq h'(d+1,c-1,k)$.
\end{proof}

\section{Conclusion \& Open Questions}
\label{sec:conclusion}
\subsection{Conclusion}
We have proved novel upper bounds on the size of the support of fractional edge covers and vertex covers. 
These bounds have then been fruitfully applied to the problem of checking 
$\fhw(H) \leq k$ for given hypergraph $H$. Recall that, without imposing any restrictions on the hypergraph $H$, 
this problem is NP-complete even for $k=2$~\cite{DBLP:conf/pods/FischlGP18},
thus ruling out even XP-membership.
In contrast, for hypergraph classes that exhibit bounded multi-intersection, we have managed to establish 
XP-membership, that is, checking $\fhw(H) \leq k$ for hypergraphs in such a class is feasible in polynomial time 
for any constant $k$. Beyond the application to checking fractional hypertree width, our main result reveals completely new and far-reaching connections between fractional covers in hypergraphs and hypergraph structure which may be of independent interest in a wide variety of fields.

Below we identify a number of interesting open problem that are closely related to our main results.

\subsection{Precise Computation of $\fhw(H)$}
We have shown that for any $(c,d)$-hypergraph $H$,
the question $\fhw(H) \stackrel{?}{\leq}k$ can be answered
in polynomial time with the degree of the polynomial depending 
on $c$, $d$, and $k$. Suppose we are given a constant 
$k$ such that $\fhw(H) \leq k$.
Is it possible to compute the optimal (precise) value of
$\fhw(H)$ in polynomial time  with the degree of the 
polynomial depending on $c$, $d$, and $k$?

We know that $1 \leq \fhw(H) \leq k$ 
and that for each $1 \leq k' \leq k$ we can test
$\fhw(H) \leq k'$ in time polynomial in $c,d,k'$.
It might seem that $\fhw(k)$ can be efficiently
computed by repeated binary-search like querying
$\fhw(H) \leq k'$ for values of $k'$ getting closer and closer 
to the actual value of $\fhw(H)$.  

Unfortunately, this method
does not work. More specifically,
there is no function $h^*(c,d,k)$ upper bounding
the degree of the polynomial for the runtime of the 
resulting algorithm. Indeed, if such a function 
existed then it would hold that $h(c,d,k') \leq h^*(c,d,k)$
for any $1 \leq k'  \leq k$.
The proposition below demonstrates that this is not
the case. 

\begin{proposition} \label{incont}
$h(2,1,x)$ tends to infinity as $x$ approaches $2$ from below. 
\end{proposition}

\begin{proof}
  Recall the hypergraph family $(H_r)_{r \geq 2}$ from Example~\ref{ex:LongEdge} that we defined as follows.
  $H_r=(V_r,E_r)$ with $V_r=\{v_0, \dots, v_r\}$
  and $E=\{e_0, \dots, e_r\}$ with $e_0=\{v_1, \dots, v_r\}$
  and $e_i=\{v_0,v_i\}$ for $1 \leq i \leq r$.

  It is known that the size of the smallest fractional edge 
  cover of $H_r$ is $2-1/r$ and the cover is witnessed 
  by the unique assignment of weights where the weight
  of $e_0$ is $1-1/r$ and the weight of the rest of the hyperedges
  is $1/r$. Clearly the support of this assignment of 
  weights is $r+1$ and hence $h(2,1,2-1/r) \geq r+1$ for each
  integer $r \geq 2$.
  In fact, $H_r$ witnesses that for any
  $2-1/r \leq x<2$, $h(2,1,x) \geq r+1$. Indeed, a direct inspection shows
  that any edge cover of $H_r$ of a support of size smaller than $r+1$
  needs to have weight at least $2$. It follows that even if we set $x$
  larger than $2-1/r$ but still smaller than $2$, the support of
  size $r+1$ is needed anyway. The rest of the proof is an elementary
  calculus exercise. 
\end{proof}

 \nop{
Proposition \ref{incont} also has an interesting consequence in
the context of the dual problem of the fractional hitting set. 
Proposition \ref{prop:furedi} upper bounds the support as a function of the rank
and $\lceil k \rceil$. On the other hand Proposition \ref{incont} shows
that already for $(2,1)$-hypergraphs the support size cannot be upper bounded
as a function of $\lceil k \rceil$. Thus this nice property is lost because of the
generalization from rank to multi-intersection. }

The impossibility to efficiently compute $\fhw(H)$ for $(c,d)$-hypegraphs by the method as above, of course, does not mean that the parameter cannot be efficiently computed.
We leave the possibility of such a computation as an interesting open question.

\begin{question}  \label{polyprec}
Let $H$ be a $(c,d)$ hypergraph such that $\fhw(H) \leq k$ for some integral constant $k$.
Is it possible to compute $\fhw(H)$ in a polynomial time with the degree of the polynomial
depending on $c,d,k$?
\end{question}

It seems that a positive answer to Question \ref{polyprec} requires a new algorithmic
approach for the computation of fractional hypertree decompositions of small width where bags
do not necessarily have bounded support. This will require a deeper insight
into the structure of hypertree decompositions of hypergraphs. 

\subsection{From Bounded Multi-Intersection to Bounded VC Dimension}
It is known that $(c,d)$-hypergraphs have VC dimension at most $c+d$ \cite{JACM2021}.
Therefore, it is natural to ask whether it is possible to generalize Theorem \ref{maintheor} 
from bounded multi-intersection to bounded VC dimension.
More precisely, is there a function $f$  such that for any constants $d$ and $k$,
any hypergraph $H$
of VC dimension at most $d$ and fractional edge cover of weight at most $k$,
has a fractional edge cover $\gamma$ of weight at most $k$ and the support of $\gamma$ is of size at most $f(d,k)$?
We conjecture that the answer to this question is negative and that there is a class of hypergraphs
witnessing  the negative answer.

\begin{conjecture}
There are constants $d,k$ and an infinite class $\mathcal{H}$ of hypergraphs
whose VC dimension is at most $d$, the fractional edge cover is at most $k$
and the set $\{minsupport_k(H)|H \in \mathcal{H}\}$ is unbounded where 
$minsupport_k(H)$ is the smallest size of support of an edge cover of $H$
of weight at most $k$.  
\end{conjecture}

Let us discuss the reason why we stated the above conjecture.
A notable result \cite{DBLP:journals/talg/PhilipRS12}  in the area of parameterized complexity 
implies that the (non-fractional) set cover problem is FPT for $(c,d)$-hypergraphs
parameterized by the size $k$.
(The result is stated for dominating sets but can be reformulated in
terms of set covers through a minor modification.)
On the other hand, the problem becomes W[1]-hard already
for VC dimension $2$ \cite{HittingSetHardness}. 
Thus the set cover problem for bounded VC dimension is notably harder than for bounded
multi-intersection.

\subsection{Fixed-Parameter Tractability}
Recall that fractional hypertree width ($\fhw$) is defined as the 
$f$-width where $f$ is the {\em fractional\/} edge cover number of the 
bags of a tree decomposition. Analogously, the generalized 
hypertree width ($\ghw$) is defined as the 
$f$-width where $f$ is the {\em integral\/} edge cover number of the 
bags of a tree decomposition.
The computation of both, $\fhw$ and $\ghw$ is
hard for hypergraphs in general \cite{2009gottlob,DBLP:conf/pods/FischlGP18}.
However, our recent results demonstrate that
the generally intractable problems for the computation
of these notions of width admit XP-algorithms for restricted
classes of hypergraphs. It is therefore natural to ask whether
even more efficient algorithms, and in particular FPT-algorithms, are possible.

We believe that the $(2,1)$-hypergraphs are the right
class to start this investigation with. Even more specifically, we propose to first look at the situation for generalized hypertree width. The parameterized
intractability, if established  for this class of hypergraphs, will extend
to parameterized intractability for $(c,d)$-hypergraphs
in general. Moreover, the methods used in Section~\ref{sec:fhw} to show tractability of checking $\fhw$ rely on the tractability of checking $\ghw$ for the respective fragments. On the other hand, if an FPT-algorithm for  generalized
hypertree width is obtained, it is likely to be based on
a novel insight, thus inspiring further research
regarding a possibility of its generalization.
We believe that the case of $(2,1)$-hypergraphs is a critical starting point for such considerations
as the general case for $(c,d)$-hypergraphs may involve significant additional combinatorial challenges which are not directly relevant for the key observations. 
We therefore propose the following future research question.
\begin{question}
Is there an FPT-algorithm parameterized by $k$ that
tests $\ghw(H) \leq k$ for $(2,1)$-hypergraphs?
\end{question}

Recent work has shown that, analogously to treewidth in graphs, $\ghw$ can be characterised in terms of forbidden substructures in degree-2 hypergraphs (i.e., $(3,0)$-hypergraphs)~\cite{DBLP:journals/corr/abs-2111-11532}. Such a characterisation can provide an alternative path towards fixed-parameter tractable checking of $\ghw \leq k$ (in the degree 2 case) through deciding whether certain substructures (whose size depends on $k$) are present in the hypergraph.

\section*{Acknowledgements }
The authors acknowledge support by the Vienna Science and Technology Fund (WWTF) [10.47379/ICT2201] and the Austrian Science Fund (FWF): Project P30930.
Georg Gottlob is a Royal Society Research Professor and acknowledges support by the Royal Society in this role through the  “RAISON DATA” project  (Reference No. RP\textbackslash{}R1\textbackslash{}201074). Matthias Lanzinger acknowledges support by the Royal Society  “RAISON DATA” project  (Reference No. RP\textbackslash{}R1\textbackslash{}201074).

\bibliographystyle{acm}
\bibliography{jctb21}

\end{document}